\theoremstyle{plain}
\newtheorem{theorem}{Theorem}[section]
\newtheorem{lemma}[theorem]{Lemma}
\newtheorem{proposition}[theorem]{Proposition}
\newtheorem{corollary}[theorem]{Corollary}
\theoremstyle{definition}
\newtheorem{definition}[theorem]{Definition}
\theoremstyle{remark}
\newtheorem{remark}[theorem]{Remark}
\newcommand{\R}{\mathbb{R}}
\newcommand{\Z}{\mathbb{Z}}
\newcommand{\Q}{\mathbb{Q}}
\newcommand{\NN}{\mathcal{N}}
\newcommand{\relu}{\operatorname{ReLU}}
\newcommand{\LRA}{\textsc{LRA}}
\newcommand{\SMT}{\textsc{SMT}}
\newcommand{\MILP}{\textsc{MILP}}
\newcommand{\UNSAT}{\textsc{UNSAT}}
\newcommand{\SMTLRA}{\ensuremath{\SMT(\LRA)}}
\title{Proof-Carrying Verification for ReLU Networks via Rational Certificates}
\author{Chandrasekhar Gokavarapu\\
Department of Mathematics, Government College (Autonomous), Rajahmundry, India\\
\texttt{chandrasekhargokavarapu@gmail.com}}
\date{}
\begin{document}
\maketitle

\begin{abstract}
Rectified Linear Unit (ReLU) networks are piecewise-linear (PWL), so universal linear safety properties can be reduced to reasoning about linear constraints. Modern verifiers rely on \SMTLRA procedures or \MILP{} encodings, but a safety claim is only as trustworthy as the evidence it produces. We develop a proof-carrying verification core for PWL neural constraints on an input domain $D\subseteq\R^n$. We formalize the exact PWL semantics as a union of polyhedra indexed by activation patterns, relate this model to standard exact \SMT{}/\MILP{} encodings and to the canonical convex-hull (ideal) relaxation of a bounded ReLU, and introduce a small certificate calculus whose proof objects live over $\Q$. Two certificate types suffice for the core reasoning steps: entailment certificates validate linear consequences (bound tightening and learned cuts), while Farkas certificates prove infeasibility of strengthened counterexample queries (branch-and-bound pruning). We give an exact proof kernel that checks these artifacts in rational arithmetic, prove soundness and completeness for linear entailment, and show that infeasibility certificates admit sparse representatives depending only on dimension. Worked examples illustrate end-to-end certified reasoning without trusting the solver beyond its exported witnesses.
\end{abstract}

\paragraph{Keywords.}
ReLU networks; piecewise-linear verification; convex-hull semantics; \SMTLRA; \MILP{}; proof-carrying verification;
rational certificates; Farkas lemma.
\paragraph{MSC Classification}{68T07,68V35,90C05,90C11,13P15, 68W30}
\bigskip
\section{Introduction}

Formal verification of neural networks replaces black-box testing by a symbolic decision problem:
does there exist an input in a domain $D\subseteq\R^n$ that violates a safety specification?
For ReLU networks, this question is expressible using piecewise-linear constraints, and can be attacked by
\SMTLRA{} procedures or \MILP{} encodings \cite{EhlersATVA2017,CheungVIPR2017}. The challenge is twofold.
First, the PWL semantics induces exponentially many activation patterns, so scalability depends on strong propagation.
Second, in safety-critical contexts, an \UNSAT\ verdict (``no counterexample exists'') should be accompanied by
evidence that can be independently checked \cite{KatzReluplex2017,KatzMarabou2019}.

\paragraph{Context and related work.}
Complete verification of ReLU networks has been pursued via SMT-style search (e.g., Reluplex/Marabou and related SMT-based methods)
\cite{KatzReluplex2017,KatzMarabou2019,HuangCAV2017Safety} and via exact MILP formulations \cite{EhlersATVA2017,TjengMILP2017}.
On the scalable (incomplete) side, abstract-interpretation and convex-relaxation methods such as
AI$^2$, DeepZ, DeepPoly, and CROWN compute sound bounds by propagation
\cite{GehrSP2018AI2,SinghNeurIPS2018DeepZ,SinghPOPL2019DeepPoly,ZhangNeurIPS2018CROWN}.
These bound computations are often combined with branch-and-bound search to recover completeness
\cite{BunelJMLR2020BaB,WangNeurIPS2021BetaCROWN,BakNFM2021nnenum}.
Practical \SMT{} and \MILP{} backends rely on mature solvers such as Z3 and CVC4
\cite{deMouraBjornerTACAS2008Z3,BarrettCVC4CAV2011} and on standard theory foundations
\cite{BarrettSMTHandbook2009,Schrijver1986}.
Recent work has emphasized proof production for neural-network verification; we contribute a lightweight
LP/Farkas certificate layer that is amenable to symbolic checking \cite{ElboherArxiv2025AbstractionProofProduction}.
For a recent benchmark-centric assessment of verifier performance and complementarities, see \cite{KoenigJMLR2024CriticallyAssessing}.

\paragraph{Computer-algebra perspective.}
Modern verifiers already compute linear bounds, generate linear lemmas, and prune infeasible regions.
Our thesis is that the \emph{mathematical objects underlying these steps are inherently symbolic}:
they can be exported as rational certificates (vectors of multipliers) and checked by exact arithmetic.
This turns neural verification into a proof-carrying pipeline where solvers may be untrusted, but certificates are
small, explicit, and machine-checkable.

\paragraph{What is new (direction, not rephrasing).}
We do not claim novelty in the existence of convex-hull ReLU relaxations or in LP duality.
The novel direction is a \emph{compositional certificate calculus} for PWL neural constraints:
each propagation step emits a certificate; certificates compose across refinements; and certificate checking
is isolated as a symbolic computation problem. This aligns verification with the computer-algebra ethos:
exact manipulation and validation of algebraic objects produced by numeric/LP engines.

\paragraph{Contributions.}
The paper makes the following contributions:
\begin{enumerate}
  \item \textbf{Exact PWL semantics and encodings.}
  We formalize ReLU networks as unions of pattern-indexed polyhedra in extended space and relate this
  semantics to exact \SMT/\MILP{} encodings and to canonical convex-hull relaxations.

  \item \textbf{Proof-carrying certificate calculus over $\Q$.}
  We present a solver-agnostic certificate interface based on rational objects and give exact checkers for
  (i) Farkas infeasibility certificates and (ii) entailment certificates for validated propagation and learned cuts.

  \item \textbf{Computer-algebra normalization and log compression.}
  We introduce symbolic normalization/canonicalization routines for certificates (and, optionally, constraint rows)
  that preserve checkability under exact arithmetic, enabling compact proof logs.

  \item \textbf{Sparse proof artifacts.}
  We prove that infeasibility admits a dimension-sparse rational certificate (support bounded by $d+1$),
  which yields principled compression guarantees for pruning proofs.

  \item \textbf{Worked proof-carrying examples.}
  We provide explicit, fully checkable examples demonstrating (a) direct safety entailment via \textsc{CheckEntail},
  (b) \UNSAT{} pruning via \textsc{CheckFarkas}, and (c) stabilization / aggregated cuts as typical verification steps.
\end{enumerate}

\section{Problem Statement and Verification Query}

Let $\NN:\R^n\to\R^m$ be a feed-forward network with $L$ layers:
$z^{(0)}=x$ and for $i=1,\dots,L$,
\[
s^{(i)} = W^{(i)}z^{(i-1)} + b^{(i)},\qquad z^{(i)} = \sigma(s^{(i)}),
\]
where $\sigma=\relu$ for hidden layers (and typically identity at the output).
Let $D\subseteq\R^n$ be a compact domain, and let $\mathrm{Safety}(z^{(L)})$ be a specification,
assumed representable by linear inequalities in this paper \cite{KatzReluplex2017,EhlersATVA2017}
.

\begin{definition}[Counterexample query]
Define
\[
\Phi_{\neg P}\;:=\;\exists x\in D:\ \bigwedge_{i=1}^{L}\bigl(z^{(i)}=\sigma(W^{(i)}z^{(i-1)}+b^{(i)})\bigr)\ \land\ \neg\mathrm{Safety}(z^{(L)}).
\]
Then $\NN$ is safe on $D$ iff $\Phi_{\neg P}$ is \UNSAT.
\end{definition}

\section{PWL Semantics and Exact Encodings}
\label{sec:pwl-encodings}

This section develops a solver-aligned, \emph{proof-oriented} semantics for ReLU networks as
\emph{unions of polyhedra} together with \emph{exact extended formulations} (\MILP/\SMT)
and their \emph{canonical convex relaxations}.
The guiding (non-speculative) hooks are standard in state-of-the-art verifiers:
(i) activation-pattern decomposition, (ii) bounded big-$M$ exactness, and
(iii) convex-hull tightness. The new direction here is to present these hooks as a
\emph{compositional polyhedral calculus} based on projection/elimination principles,
which is precisely the interface needed for later proof-carrying certificates.

\subsection{Notation and the graph of a ReLU network}

Let $\NN:\R^n\to\R^m$ be a feed-forward network with $L$ layers.
Write $z^{(0)}=x\in\R^n$ and for $i=1,\dots,L$,
\[
s^{(i)} = W^{(i)}z^{(i-1)} + b^{(i)},\qquad z^{(i)} = \sigma^{(i)}(s^{(i)}),
\]
where $\sigma^{(i)}=\relu$ for hidden layers and typically $\sigma^{(L)}=\mathrm{id}$.
Let $d_i := \dim(z^{(i)})$ and denote by $N:=\sum_{i=1}^{L-1} d_i$ the total number of ReLU coordinates.

We will represent the network by a constraint system over the variable tuple
\[
v := \bigl(x,\; (s^{(i)})_{i=1}^{L-1},\; (z^{(i)})_{i=1}^{L}\bigr)\in \R^{n+\sum_{i=1}^{L-1}d_i+\sum_{i=1}^{L}d_i}.
\]
The \emph{graph} of $\NN$ is the set
\[
\mathrm{Gr}(\NN)\;:=\;\{(x,y)\in\R^{n+m}: \ y=\NN(x)\}.
\]
Our first goal is to realize $\mathrm{Gr}(\NN)$ as a finite union of polyhedra obtained by eliminating
internal variables from pattern-indexed polyhedra. This is the correct algebraic viewpoint for computer algebra:
\emph{encoding is an extended formulation; semantics is projection.}

\subsection{ReLU as a disjunctive constraint}

For scalar $s$ and $z=\relu(s)$, the exact semantics is the disjunction
\[
(s\le 0 \land z=0)\ \lor\ (s\ge 0 \land z=s).
\]
Thus a ReLU network induces a Boolean phase structure coupled to linear arithmetic.
We formalize phases by \emph{pattern variables} rather than by case distinctions in proofs.\cite{KatzReluplex2017,KatzMarabou2019}

\subsection{Activation-pattern polyhedra in extended space}

Index each ReLU coordinate by a single index $j\in\{1,\dots,N\}$.
An activation pattern is $\pi\in\{0,1\}^{N}$, where $\pi_j=1$ means ``active'' and $\pi_j=0$ means ``inactive'' \cite{EhlersATVA2017}
.

\begin{definition}[Pattern-indexed extended polyhedron]\label{def:Ppi}
Fix $\pi\in\{0,1\}^N$. Define $P_\pi\subseteq\R^{\dim(v)}$ as the set of all tuples
$v=(x,(s^{(i)}),(z^{(i)}))$ satisfying:
\begin{enumerate}[label=(\roman*),leftmargin=1.8em]
\item (Affine layers) $s^{(i)} = W^{(i)}z^{(i-1)} + b^{(i)}$ for $i=1,\dots,L-1$, and $z^{(L)}=W^{(L)}z^{(L-1)}+b^{(L)}$.
\item (Phase constraints) For every ReLU coordinate $j$ corresponding to some $(i,k)$:
\[
\pi_j=1 \Rightarrow \bigl(s^{(i)}_k \ge 0 \ \land\ z^{(i)}_k = s^{(i)}_k\bigr),\qquad
\pi_j=0 \Rightarrow \bigl(s^{(i)}_k \le 0 \ \land\ z^{(i)}_k = 0\bigr).
\]
\end{enumerate}
\end{definition}

\begin{definition}[Pattern region in input space]\label{def:Rpi}
Let $\mathrm{proj}_x$ denote projection onto input coordinates.
Define the \emph{pattern region} $R_\pi:=\mathrm{proj}_x(P_\pi)\subseteq\R^n$.
\end{definition}

\begin{remark}
$P_\pi$ is a polyhedron in the extended space because it is defined by affine equalities and inequalities.
The region $R_\pi$ is also a polyhedron because projection preserves polyhedrality
(e.g.\ by Fourier--Motzkin elimination). The elimination viewpoint is essential:
verifiers operate in extended space; correctness ultimately concerns projections.
\end{remark}

\subsection{Union-of-polyhedra semantics and piecewise-affine maps}

We now give a structural theorem that is used implicitly throughout neural verification,
but we state it as an explicit projection identity (extended formulation semantics).

\begin{theorem}[Exact PWL semantics as projection of a union of polyhedra]\label{thm:union-projection}
Let $\NN$ be a ReLU network as above. Then:
\begin{enumerate}[label=(\alph*),leftmargin=1.8em]
\item The extended feasible set of the exact network constraints equals a union of pattern polyhedra:
\[
\{v:\ v \text{ satisfies all exact ReLU and affine constraints}\}\;=\;\bigcup_{\pi\in\{0,1\}^N} P_\pi.
\]
\item The network graph is the projection of this union:
\[
\mathrm{Gr}(\NN)\;=\;\mathrm{proj}_{x,z^{(L)}}\Bigl(\ \bigcup_{\pi\in\{0,1\}^N} P_\pi\ \Bigr).
\]
\item For every $\pi$ with $R_\pi\neq\emptyset$, there exist $A_\pi\in\R^{m\times n}$ and $c_\pi\in\R^m$
such that $\NN(x)=A_\pi x + c_\pi$ for all $x\in R_\pi$.
\end{enumerate}
\end{theorem}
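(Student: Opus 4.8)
The plan is to prove the three claims in sequence: (a) is a Boolean distributivity identity over the ReLU disjunctions, (b) follows from (a) together with the functionality of $\relu$, and (c) is an explicit affine composition on a fixed pattern, whose only nontrivial point is consistency with $\NN$.

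For part (a), I would start from the observation that the exact constraint system is the conjunction of the shared affine layer equalities of item (i) with one independent ReLU disjunction per coordinate $j$, namely $(s_j\le 0\land z_j=0)\lor(s_j\ge 0\land z_j=s_j)$. Since the affine equalities are common to every branch, I can factor them out and apply the distributive law of propositional logic to the remaining $N$ binary disjunctions, rewriting $\bigwedge_j(\text{disj}_j)$ as $\bigvee_{\pi\in\{0,1\}^N}\bigwedge_j(\text{branch }\pi_j)$. Each branch-conjunction together with the affine equalities is precisely the definition of $P_\pi$, giving the claimed set identity. The only point worth checking is the phase boundary $s_j=0$, where both branches agree ($z_j=0=s_j$); this makes the union overlap harmlessly along boundaries and does not affect the identity, while empty patterns ($P_\pi=\emptyset$) simply contribute nothing.

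For part (b), I would first record that the image of a union under projection is the union of the images, so by (a) it suffices to show that $\mathrm{proj}_{x,z^{(L)}}$ of the entire extended feasible set equals $\mathrm{Gr}(\NN)$. The inclusion $\supseteq$ is obtained by running the forward pass: given $x$, the induced tuple $\bigl(x,(s^{(i)}),(z^{(i)})\bigr)$ satisfies every affine and ReLU constraint and projects to $(x,\NN(x))$. For $\subseteq$, I would note that the disjunction $(s\le 0\land z=0)\lor(s\ge 0\land z=s)$ is exactly the graph of the scalar map $\relu$, verified by the three cases $s>0$, $s<0$, $s=0$; hence any feasible tuple satisfies $z^{(i)}_k=\relu(s^{(i)}_k)$ coordinatewise, so it coincides with the forward pass of its own input and its last-layer projection is $\NN(x)$. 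Combining the two inclusions with (a) yields (b).

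For part (c), fix $\pi$ with $R_\pi\neq\emptyset$ and introduce diagonal gating matrices $\Delta^{(i)}_\pi\in\{0,1\}^{d_i\times d_i}$ carrying $1$ on active coordinates and $0$ on inactive ones. On $P_\pi$ the phase constraints read $z^{(i)}=\Delta^{(i)}_\pi\bigl(W^{(i)}z^{(i-1)}+b^{(i)}\bigr)$, with the output layer affine; composing these affine recursions from $z^{(0)}=x$ to $z^{(L)}$ produces closed-form $A_\pi\in\R^{m\times n}$ and $c_\pi\in\R^m$ with $z^{(L)}=A_\pi x+c_\pi$ for all $v\in P_\pi$. To pass from this surrogate to the genuine network on $R_\pi$, I would take $x\in R_\pi$, pick $v\in P_\pi$ projecting to it, and use (a)–(b): since $v$ lies in the extended feasible set, its output equals $\NN(x)$, whence $\NN(x)=A_\pi x+c_\pi$. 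I expect the main obstacle to be precisely this consistency step: the gated affine map is formally definable for any $\pi$, but it equals $\NN$ only because the sign inequalities $s^{(i)}_k\ge 0$ (active) and $s^{(i)}_k\le 0$ (inactive) built into $P_\pi$ force $\Delta^{(i)}_\pi s^{(i)}=\relu(s^{(i)})$ coordinatewise. Routing through (a)–(b) discharges this cleanly by transporting the functional identity $z^{(i)}_k=\relu(s^{(i)}_k)$ onto the region, avoiding a direct induction; everything else — distributivity for (a), commutation of projection with unions plus the three-case scalar check for (b), and affine composition for (c) — is routine bookkeeping.
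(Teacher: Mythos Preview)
Your proposal is correct and follows essentially the same approach as the paper: part (a) via the Boolean case split over activation branches (you phrase it as distributivity, the paper as a double inclusion, but these are the same argument), part (b) by projecting both sides of (a) and identifying the extended feasible set with the graph of $\NN$, and part (c) via the diagonal-mask affine composition on $P_\pi$. Your treatment is in fact more explicit than the paper's at the points where care is needed---the functionality check in (b) and the consistency step in (c) routing through (a)--(b)---but the structure and ideas coincide.
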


\begin{proof}
(a) Fix any exact-feasible assignment $v$. For each ReLU coordinate $j$, either $s_j\ge 0$ and $z_j=s_j$,
or $s_j\le 0$ and $z_j=0$. Define $\pi_j=1$ in the first case and $\pi_j=0$ in the second.
Then $v$ satisfies the defining constraints of $P_\pi$, hence $v\in\bigcup_\pi P_\pi$.
Conversely, if $v\in P_\pi$ for some $\pi$, then for each ReLU coordinate it satisfies one of the two exact
ReLU branches, hence is exact-feasible.

(b) This follows by applying $\mathrm{proj}_{x,z^{(L)}}$ to both sides of (a) and using the definition of $\mathrm{Gr}(\NN)$.

(c) Fix $\pi$ with $R_\pi\neq\emptyset$. On $P_\pi$, each ReLU coordinate is replaced by a linear equality
($z=s$ or $z=0$). Together with the affine layer equalities, this yields a linear system expressing all internal
variables as affine functions of $x$ (uniquely for $z^{(L)}$ as a function of $x$), hence $z^{(L)}=A_\pi x+c_\pi$.
Therefore $\NN(x)=A_\pi x+c_\pi$ for all $x\in R_\pi$.
\end{proof}

\begin{remark}[A computable formula for $(A_\pi,c_\pi)$]
If one writes each hidden-layer phase as a diagonal mask $D^{(i)}_\pi\in\{0,1\}^{d_i\times d_i}$,
then on $R_\pi$,
\[
z^{(i)} = D^{(i)}_\pi\, s^{(i)} = D^{(i)}_\pi\bigl(W^{(i)}z^{(i-1)}+b^{(i)}\bigr),
\]
so $z^{(L)}$ is obtained by composing affine maps.
This provides an explicit symbolic object attached to each pattern, useful for computer algebra system (CAS)-style reasoning about regions.
\end{remark}

\subsection{Polyhedral-complex structure (a geometric ``state explosion'' lens)}

A key nontrivial structural fact is that the collection of pattern regions behaves like a polyhedral complex:
different patterns meet along faces where some pre-activations are exactly zero.
This is a rigorous way to understand ``state explosion'' geometrically (rather than only combinatorially).

\begin{proposition}[Face-intersection property in extended space]\label{prop:face-intersection}
Let $\pi,\pi'\in\{0,1\}^N$. Then $P_\pi\cap P_{\pi'}$ is a (possibly empty) face of both $P_\pi$ and $P_{\pi'}$.
More precisely, if $\pi$ and $\pi'$ differ exactly on an index set $J\subseteq\{1,\dots,N\}$, then
\[
P_\pi\cap P_{\pi'} \;=\; P_\pi \cap \bigcap_{j\in J}\{s_j=0,\ z_j=0\}
\;=\; P_{\pi'} \cap \bigcap_{j\in J}\{s_j=0,\ z_j=0\}.
\]
\end{proposition}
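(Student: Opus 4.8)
The plan is to analyze the intersection ReLU-coordinate by ReLU-coordinate, establish the two set-equalities by mutual inclusion, and then certify the face property by producing one explicit supporting linear functional written purely in terms of the pre-activations. First I would dispose of the indices outside $J$: for $j\notin J$ we have $\pi_j=\pi'_j$, so the phase constraints imposed by $P_\pi$ and $P_{\pi'}$ coincide, and the affine-layer equalities of \Cref{def:Ppi}(i) are pattern-independent. Thus all the action is on $J$. For $j\in J$, say $\pi_j=1$ and $\pi'_j=0$ (the reverse orientation is symmetric): membership in $P_\pi$ forces $s_j\ge 0\ \land\ z_j=s_j$, while membership in $P_{\pi'}$ forces $s_j\le 0\ \land\ z_j=0$. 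Conjoining these yields $s_j=0$ and $z_j=0$. This is the core observation that drives everything.

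Next I would prove the first equality $P_\pi\cap P_{\pi'}=P_\pi\cap\bigcap_{j\in J}\{s_j=0,\ z_j=0\}$ by two inclusions. The inclusion $\subseteq$ is immediate from the core observation. For $\supseteq$, take $v\in P_\pi$ with $s_j=z_j=0$ for all $j\in J$ and verify $v\in P_{\pi'}$: the affine constraints hold (pattern-independent); for $j\notin J$ the phase constraint is identical to the one $v$ already satisfies in $P_\pi$; and for $j\in J$ the required $P_{\pi'}$ phase constraint (either $s_j\ge 0\land z_j=s_j$ or $s_j\le 0\land z_j=0$) is met, since $s_j=z_j=0$ satisfies it on the nose. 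The second equality then follows by swapping the roles of $\pi$ and $\pi'$.

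Finally, for the face property I would exhibit the functional $\ell(v):=\sum_{j\in J}\epsilon_j\,s_j$, where $\epsilon_j=+1$ when $\pi_j=1$ and $\epsilon_j=-1$ when $\pi_j=0$. By this sign convention each summand $\epsilon_j s_j$ is exactly the slack of the $j$-th phase inequality of $P_\pi$, so $\ell(v)\ge 0$ is a valid inequality on all of $P_\pi$. Its minimum value $0$ is attained precisely where every $s_j=0$ ($j\in J$), and on that locus $z_j=0$ comes for free (either $z_j=s_j=0$, or $z_j=0$ already, via the pattern equality present in $P_\pi$). Hence $\{v\in P_\pi:\ell(v)=0\}$ equals the claimed intersection and is, by definition, a face of $P_\pi$. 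Since passing to $\pi'$ flips every sign on $J$, the functional $-\ell$ plays the analogous supporting role for $P_{\pi'}$, so the same set is a face of $P_{\pi'}$; and when $J=\emptyset$ (i.e.\ $\pi=\pi'$) one has $\ell\equiv 0$ and the ``face'' is the whole polyhedron, consistent with the empty-intersection convention.

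The hard part will not be any of the inclusions, which are mechanical, but the bookkeeping of the $z_j=0$ conditions. These are \emph{not} independent supporting hyperplanes: they are forced by $s_j=0$ together with the equalities already built into $P_\pi$. Writing the supporting functional using only the pre-activations $s_j$ (rather than dragging the $z_j$ into it) is exactly what keeps $\ell$ valid on the entire polyhedron and makes the face certificate clean; I expect this to be the one point requiring care in the write-up.
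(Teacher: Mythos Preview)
Your argument is correct and, on the set-equality part, essentially identical to the paper's: both analyze coordinates in $J$ versus outside $J$ and use the observation that the two phase conjunctions at a differing index force $s_j=0$ and $z_j=0$.

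Where you diverge is the face claim. The paper disposes of it in one line, asserting that ``adding linear equalities to a polyhedron yields a face.'' Taken literally that statement is false (intersecting a square with a generic hyperplane through its interior is not a face); it works here only because the equalities $s_j=0$ tighten inequalities already present in the description of $P_\pi$, and $z_j=0$ is then redundant. You make exactly this point explicit by exhibiting the supporting functional $\ell(v)=\sum_{j\in J}\epsilon_j s_j$, which is a nonnegative sum of slacks of defining inequalities of $P_\pi$ and therefore a genuine valid inequality whose zero set is the desired face. So your route is not really different in spirit, but it is more rigorous: you supply the certificate that the paper's one-liner implicitly relies on, and your remark that the $z_j$ conditions are \emph{consequences} rather than independent supporting hyperplanes is precisely the missing justification.
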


\begin{proof}
If $\pi$ and $\pi'$ agree on a ReLU coordinate $j$, then both impose the same linear constraints at $j$.
If they differ on $j$, then one polyhedron imposes $(s_j\ge 0,\ z_j=s_j)$ while the other imposes $(s_j\le 0,\ z_j=0)$.
A point can satisfy both iff $s_j=0$ and $z_j=0$ (since $z_j=s_j$ and $z_j=0$ together force $s_j=0$).
Hence the intersection is obtained by adding the equalities $s_j=0,\ z_j=0$ for all $j$ where patterns differ.
Adding linear equalities to a polyhedron yields a face (possibly empty), so the claim follows.
\end{proof}

\subsection{Exact bounded big-$M$ formulation (\MILP) and its correctness}

We now establish, with full equivalence (not just implication), that bounded big-$M$ constraints
encode exactly the disjunction $z=\max(0,s)$ when bounds are valid.
This theorem is a cornerstone of complete \MILP-based verification and is non-speculative \cite{EhlersATVA2017,TjengMILP2017}
.

\begin{theorem}[Exactness of bounded big-$M$ for a single ReLU]\label{thm:bigM-exact}
Assume valid bounds $l\le s\le u$ with $l<u$. Consider variables $(s,z,\delta)$ with $\delta\in\{0,1\}$ and constraints
\begin{align}
& z \ge 0,\quad z \ge s,\label{eq:bm1b}\\
& z \le u\,\delta,\label{eq:bm2b}\\
& z \le s - l(1-\delta),\label{eq:bm3b}\\
& l \le s \le u,\quad \delta\in\{0,1\}.\label{eq:bm4b}
\end{align}
Then the following are equivalent:
\begin{enumerate}[label=(\roman*),leftmargin=1.8em]
\item $z=\relu(s)$ and $l\le s\le u$.
\item There exists $\delta\in\{0,1\}$ such that \eqref{eq:bm1b}--\eqref{eq:bm4b} hold.
\end{enumerate}
\end{theorem}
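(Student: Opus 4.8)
The plan is to exploit the fact that $\delta$ ranges over the two-element set $\{0,1\}$, so both directions reduce to a finite case split in which the big-$M$ constraints collapse to exactly one linear branch of the ReLU disjunction. I would organize the argument as a pair of implications, splitting on the sign of $s$ for $(\mathrm{i})\Rightarrow(\mathrm{ii})$ and on the value of $\delta$ for $(\mathrm{ii})\Rightarrow(\mathrm{i})$.

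For $(\mathrm{i})\Rightarrow(\mathrm{ii})$, assume $z=\relu(s)$ with $l\le s\le u$ and exhibit a witness $\delta$. If $s\ge 0$, set $\delta=1$; then $z=s$ satisfies \eqref{eq:bm1b} (with $z\ge s$ an equality), satisfies \eqref{eq:bm2b} as $z=s\le u$ by the upper bound, and satisfies \eqref{eq:bm3b} since $s-l(1-\delta)=s$. If $s\le 0$, set $\delta=0$; then $z=0$ satisfies \eqref{eq:bm1b} because $s\le 0$, satisfies \eqref{eq:bm2b} as $u\delta=0$, and satisfies \eqref{eq:bm3b} since $0\le s-l$ follows from the lower bound $s\ge l$. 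The overlap at $s=0$ is harmless, as either choice of $\delta$ works, and \eqref{eq:bm4b} holds throughout by hypothesis.

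For $(\mathrm{ii})\Rightarrow(\mathrm{i})$, assume the constraints hold for some $\delta\in\{0,1\}$ and split on its value. When $\delta=0$, \eqref{eq:bm2b} gives $z\le 0$ while \eqref{eq:bm1b} gives $z\ge 0$, forcing $z=0$; then $z\ge s$ yields $s\le 0$, so $\relu(s)=0=z$. When $\delta=1$, \eqref{eq:bm3b} reads $z\le s$ while \eqref{eq:bm1b} gives $z\ge s$, forcing $z=s$; then $z\ge 0$ yields $s\ge 0$, so $\relu(s)=s=z$. In both cases $z=\relu(s)$, and the bounds $l\le s\le u$ are inherited directly from \eqref{eq:bm4b}.

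The subtle point I expect to be the crux is that this is an \emph{equivalence}, not merely a sound relaxation, and that hinges on the sign of $s$ being \emph{derived} rather than assumed in each phase: it is the pairing of $z\ge s$ with $z=0$ (giving $s\le 0$) and of $z\ge 0$ with $z=s$ (giving $s\ge 0$) that pins $s$ to the correct half-line and rules out spurious assignments where $\delta$ disagrees with the true phase. This is exactly where validity of the bounds must be used: the coefficients $u$ and $-l$ appearing in \eqref{eq:bm2b} and \eqref{eq:bm3b} guarantee that, in each phase, the \emph{inactive} big-$M$ constraint is slack at the genuine ReLU point ($z=s\le u$ when $\delta=1$, and $0\le s-l$ when $\delta=0$), so no feasible ReLU assignment is cut off. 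I would flag that $l<u$ is needed only for nondegeneracy of the bounding box and plays no essential role in the logical equivalence itself.
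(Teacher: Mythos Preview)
Your proposal is correct and follows essentially the same argument as the paper: both directions are handled by the same two-way case split (on the sign of $s$ for (i)$\Rightarrow$(ii), on the value of $\delta$ for (ii)$\Rightarrow$(i)), with the same derivations that $z=0$ forces $s\le 0$ and $z=s$ forces $s\ge 0$. Your remark that $l<u$ is only a nondegeneracy assumption is also consistent with the paper's use of the hypothesis.
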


\begin{proof}
(i)$\Rightarrow$(ii): Suppose $z=\relu(s)$ and $l\le s\le u$.

If $s\le 0$, set $\delta=0$. Then $z=0$. Constraints \eqref{eq:bm1b} hold since $z=0\ge 0$ and $0\ge s$.
Constraint \eqref{eq:bm2b} gives $z\le 0$ (true).
Constraint \eqref{eq:bm3b} becomes $0 \le s-l$, i.e.\ $s\ge l$, true.
\eqref{eq:bm4b} holds by assumption.

If $s\ge 0$, set $\delta=1$. Then $z=s$. Constraints \eqref{eq:bm1b} hold since $s\ge 0$ and $s\ge s$.
Constraint \eqref{eq:bm2b} becomes $s\le u$, true.
Constraint \eqref{eq:bm3b} becomes $s\le s$, true.
\eqref{eq:bm4b} holds.

(ii)$\Rightarrow$(i): Assume \eqref{eq:bm1b}--\eqref{eq:bm4b} hold for some $\delta\in\{0,1\}$.
We consider the two cases.

If $\delta=0$, then \eqref{eq:bm2b} implies $z\le 0$ while \eqref{eq:bm1b} gives $z\ge 0$, hence $z=0$.
Then \eqref{eq:bm1b} also gives $0=z\ge s$, so $s\le 0$. Therefore $z=\relu(s)=0$.

If $\delta=1$, then \eqref{eq:bm3b} gives $z\le s$ while \eqref{eq:bm1b} gives $z\ge s$, hence $z=s$.
Also \eqref{eq:bm1b} gives $z\ge 0$, so $s=z\ge 0$. Therefore $z=\relu(s)=s$.
In both cases $l\le s\le u$ holds by \eqref{eq:bm4b}.
\end{proof}

\begin{remark}[Network-level exactness]
Applying Theorem~\ref{thm:bigM-exact} independently to each ReLU coordinate and conjoining with affine layer
constraints yields an exact \MILP{} encoding of the full network, provided all bounds used are valid on the region.
This highlights the central role of certified bound propagation.
\end{remark}

\subsection{Canonical convex hull and ideal extended formulations}

The convex-hull inequalities for a bounded ReLU are not merely ``a relaxation''; they are the
\emph{unique strongest convex outer description} in $(s,z)$ given only interval bounds \cite{AndersonHuchetteVielma2020StrongMIP}
.
Equally important for encoding design: the relaxed big-$M$ formulation is an \emph{ideal}
extended formulation (its LP relaxation already produces the convex hull after projection)\cite{AndersonHuchetteVielma2020StrongMIP}
.

\begin{proposition}[Convex-hull description]\label{prop:hull-desc}
If $l<0<u$, then $\mathrm{conv}\{(s,z): z=\max(0,s),\ l\le s\le u\}$ equals
\begin{equation}\label{eq:hull}
l\le s\le u,\qquad z\ge 0,\qquad z\ge s,\qquad z \le \frac{u}{u-l}(s-l).
\end{equation}
If $u\le 0$ then $z=0$; if $l\ge 0$ then $z=s$.
\end{proposition}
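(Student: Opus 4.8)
The plan is to reduce the statement to a concrete two-dimensional geometric identity. The set $S := \{(s,z): z = \max(0,s),\ l \le s \le u\}$ is the union of the two line segments $\{(s,0): l \le s \le 0\}$ and $\{(s,s): 0 \le s \le u\}$, which share the point $(0,0)$. Since the convex hull of a union of segments is the convex hull of the union of their endpoints, I first record that
\[
\mathrm{conv}(S) = \mathrm{conv}\{(l,0),\,(0,0),\,(u,u)\} =: T,
\]
and that $T$ is a nondegenerate triangle because $(0,0)$ does not lie on the segment joining $(l,0)$ and $(u,u)$ (that segment meets the line $z=0$ only at its endpoint $(l,0)$). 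Writing $H$ for the polyhedron defined by the inequalities in \eqref{eq:hull}, the goal becomes the two inclusions $T \subseteq H$ and $H \subseteq T$.

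For $T \subseteq H$: since $H$ is an intersection of half-planes and hence convex, it suffices to check that each of the three vertices $(l,0)$, $(0,0)$, $(u,u)$ satisfies every inequality in \eqref{eq:hull}; convexity then propagates membership to all of $T$. This is a short, routine verification, and it is precisely here that the hypothesis $l < 0 < u$ enters, guaranteeing $u - l > 0$ so that the slope $\frac{u}{u-l}$ is positive and the upper (hypotenuse) inequality is oriented correctly.

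The main work --- and the step I expect to be the chief obstacle --- is the reverse inclusion $H \subseteq T$, i.e.\ showing that the inequalities introduce no spurious points. My plan is to identify the three \emph{facet-defining} inequalities of $T$: the inequality $z \ge 0$ supports the edge joining $(l,0)$ and $(0,0)$ (on the line $z=0$); the inequality $z \ge s$ supports the edge joining $(0,0)$ and $(u,u)$ (on the line $z=s$); and $z \le \frac{u}{u-l}(s-l)$ supports the edge joining $(l,0)$ and $(u,u)$ (on the line through those two vertices with slope $\frac{u}{u-l}$). I would verify the orientations by testing an interior point such as the centroid. Because a nondegenerate triangle equals the intersection of the three closed half-planes supported by its edges, these three inequalities already cut out $T$ exactly. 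It then remains to show that the interval constraints $l \le s$ and $s \le u$ are \emph{redundant} given the three facet inequalities: $s \ge l$ follows from $z \ge 0$ together with $z \le \frac{u}{u-l}(s-l)$ (using $\frac{u}{u-l} > 0$), while $s \le u$ follows from $z \ge s$ together with the upper inequality (using $l < 0$). Hence $H$ coincides with the three-inequality description and therefore with $T$. The delicate points throughout are sign bookkeeping: clearing the positive denominator $u-l$ and tracking inequality reversals when multiplying by the negative quantity $l$.

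Finally, I would dispatch the two degenerate regimes directly. If $u \le 0$ then $\max(0,s) = 0$ on $[l,u]$, so $S$ is the horizontal segment $\{(s,0): l \le s \le u\}$, which is already convex, giving $z = 0$; symmetrically, if $l \ge 0$ then $\max(0,s) = s$ on $[l,u]$, so $S$ is the diagonal segment $\{(s,s): l \le s \le u\}$, giving $z = s$. No hull construction is needed in either case.
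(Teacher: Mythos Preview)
Your proposal is correct and follows essentially the same route as the paper: identify the graph as two segments, take the convex hull as the triangle on $(l,0),(0,0),(u,u)$, and match the inequalities in \eqref{eq:hull} to the triangle's edges, with the degenerate cases handled directly. Your treatment is in fact more thorough than the paper's sketch---you prove both inclusions explicitly and additionally observe that the bounds $l\le s\le u$ are redundant given the three facet inequalities---but the underlying geometric argument is the same.
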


\begin{proof}
Assume $l<0<u$. The graph of $z=\max(0,s)$ on $[l,u]$ consists of two line segments
joining $(l,0)$ to $(0,0)$ and $(0,0)$ to $(u,u)$.
Its convex hull is the triangle with vertices $(l,0),(0,0),(u,u)$.
The inequalities $l\le s\le u$, $z\ge 0$, and $z\ge s$ cut out the infinite wedge above the ``V''.
The final inequality $z\le \frac{u}{u-l}(s-l)$ is exactly the line through $(l,0)$ and $(u,u)$,
hence cuts the wedge down to the triangle. Stable cases are immediate.
\end{proof}

\begin{theorem}[Ideality: relaxed big-$M$ projects to the convex hull]\label{thm:ideal}
Assume bounds $l\le s\le u$ and relax $\delta\in\{0,1\}$ to $0\le \delta\le 1$ in
\eqref{eq:bm1b}--\eqref{eq:bm4b}. Let $Q(l,u)$ be the resulting polyhedron in $(s,z,\delta)$.
Then its projection onto $(s,z)$ equals the convex hull described in \eqref{eq:hull} (or stable simplifications).
\end{theorem}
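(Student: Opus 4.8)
The plan is to prove the identity by \emph{eliminating the single relaxed variable} $\delta$ via Fourier--Motzkin, which is precisely the projection operation named in the statement and fits the ``semantics is projection'' viewpoint of \Cref{sec:pwl-encodings}. I would treat the generic case $l<0<u$ first (so that $u>0$ and $l<0$), and dispatch the stable cases $u\le 0$ and $l\ge 0$ separately at the end. The key observation is that among \eqref{eq:bm1b}--\eqref{eq:bm4b} only \eqref{eq:bm2b}, \eqref{eq:bm3b}, and the relaxed box $0\le\delta\le 1$ mention $\delta$; the constraints $z\ge 0$, $z\ge s$, and $l\le s\le u$ pass through the projection unchanged and reappear verbatim in \eqref{eq:hull}.

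First I would rewrite the $\delta$-constraints as explicit two-sided bounds on $\delta$. Using $u>0$, \eqref{eq:bm2b} becomes the lower bound $\delta\ge z/u$; using $l<0$, \eqref{eq:bm3b} rearranges to $l\delta\ge z-s+l$ and hence, dividing by $l<0$ (which flips the sense), to the upper bound $\delta\le (z-s)/l+1$. Together with $0\le\delta\le 1$ this gives lower bounds $\{0,\;z/u\}$ and upper bounds $\{1,\;(z-s)/l+1\}$. Fourier--Motzkin then replaces $\delta$ by the four pairwise inequalities ``each lower bound $\le$ each upper bound.'' The decisive pair is $z/u\le (z-s)/l+1$: clearing denominators by multiplying through by $ul<0$ (flipping the sense) and collecting terms yields $z(l-u)\ge u(l-s)$, and dividing by $l-u<0$ (flipping again) produces exactly $z\le \tfrac{u}{u-l}(s-l)$, the nontrivial facet in \eqref{eq:hull}.

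It then remains to show the other three Fourier--Motzkin inequalities are redundant given $l\le s\le u$ and the facet just derived. The pair $0\le 1$ is vacuous; the pair $z/u\le 1$ gives $z\le u$, and since $s\le u$ implies $\tfrac{u}{u-l}(s-l)\le u$, this is implied by the facet; the pair $0\le (z-s)/l+1$ gives $z\le s-l$, and since $\tfrac{u}{u-l}<1$ and $s-l\ge 0$ we have $\tfrac{u}{u-l}(s-l)\le s-l$, so this too follows from the facet. Conjoining the surviving facet with the $\delta$-free constraints reproduces \eqref{eq:hull} exactly, establishing both inclusions at once: Fourier--Motzkin returns the \emph{exact} projection, so the description is complete, not merely sound.

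For the reverse inclusion made explicit---useful when one wants a constructive witness rather than citing Fourier--Motzkin---I would note that any $(s,z)$ satisfying \eqref{eq:hull} admits a choice $\delta\in[\max(0,z/u),\,\min(1,(z-s)/l+1)]$, a nonempty interval precisely because the four pairwise inequalities hold, so $(s,z,\delta)\in Q(l,u)$ projects onto it. Finally, the stable cases: if $u\le 0$ then \eqref{eq:bm2b} forces $z\le u\delta\le 0$, so with $z\ge 0$ we get $z=0$; if $l\ge 0$ then \eqref{eq:bm3b} together with \eqref{eq:bm1b} pin $z=s$. The anticipated main obstacle is purely bookkeeping: the elimination requires dividing by $l$, $u$, and $l-u$, each of a definite but opposite sign, so the repeated inequality-sense flips must be tracked carefully; once the signs are handled, the redundancy checks reduce to the single elementary fact $\tfrac{u}{u-l}<1$.
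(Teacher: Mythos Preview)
Your proposal is correct and takes essentially the same approach as the paper: both eliminate the relaxed variable $\delta$ algebraically to obtain the hull facet, then exhibit a feasible $\delta$ for any hull point. The only organizational difference is that you run full Fourier--Motzkin (all four lower--upper pairs) and then discharge the three redundant ones using $\tfrac{u}{u-l}<1$, whereas the paper's Step~1 computes only the decisive pair $z/u\le (z-s)/l+1$ by direct substitution and relies on Step~2 (the explicit choice $\delta=z/u$) for the reverse inclusion; your version is slightly more systematic, the paper's slightly more economical, but the content is the same.
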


\begin{proof}
We treat the nontrivial case $l<0<u$.

\emph{Step 1 (Eliminate $\delta$ to derive the hull inequality).}
From \eqref{eq:bm2b} we have $\delta \ge z/u$ (since $u>0$ and $z\le u\delta$).
From \eqref{eq:bm3b} we have $z \le s - l(1-\delta)=s-l + l\delta$, hence
\[
z \le s-l + l\delta \le s-l + l\cdot\Bigl(\frac{z}{u}\Bigr),
\]
where the last inequality uses $\delta \ge z/u$ and $l<0$ (multiplying by $l$ reverses inequality).
Rearranging gives
\[
z\Bigl(1-\frac{l}{u}\Bigr) \le s-l
\quad\Longleftrightarrow\quad
z \le \frac{u}{u-l}(s-l),
\]
which is exactly the upper facet in \eqref{eq:hull}. Together with \eqref{eq:bm1b} and $l\le s\le u$,
this shows every projection point satisfies the hull constraints.

\emph{Step 2 (Construct $\delta$ for any hull point).}
Conversely, let $(s,z)$ satisfy \eqref{eq:hull}. Define $\delta := z/u$.
Then $0\le \delta\le 1$ because $0\le z\le u$ holds in the hull.
Constraint \eqref{eq:bm2b} holds with equality: $z=u\delta$.
For \eqref{eq:bm3b}, we need $z \le s-l + l\delta = s-l + l(z/u)$,
which is equivalent to the hull upper inequality derived in Step 1.
Constraints \eqref{eq:bm1b} and \eqref{eq:bm4b} are precisely part of the hull description.
Thus $(s,z,\delta)\in Q(l,u)$, proving surjectivity of the projection.

Stable cases $u\le 0$ or $l\ge 0$ are immediate since the ReLU is linear there and the formulations collapse.
\end{proof}

\begin{remark}[Elimination as symbolic computation]
The proof of Theorem~\ref{thm:ideal} is an explicit elimination argument:
it derives the hull inequality by algebraic manipulation of constraints and monotonicity signs.
This is exactly the kind of transformation that can be implemented and checked in exact arithmetic
(as later done for certificates), and it clarifies why hull constraints are canonical.
\end{remark}

\subsection{Facet strength (why the hull inequality is not redundant)}
To justify that \eqref{eq:hull} is \emph{maximally strong} among convex outer descriptions,
we show the upper inequality is facet-defining (hence cannot be removed without enlarging the polytope).

\begin{proposition}[Facetness of the upper hull inequality]\label{prop:facet}
Assume $l<0<u$. In the polytope $\mathrm{conv}\,G(l,u)\subset\R^2$,
the inequality $z \le \frac{u}{u-l}(s-l)$ defines a facet (an edge).
\end{proposition}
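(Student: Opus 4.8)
The plan is to argue directly from the explicit triangle description already obtained in Proposition~\ref{prop:hull-desc}. Write $T := \mathrm{conv}\,G(l,u)$, the triangle with vertices $(l,0)$, $(0,0)$, $(u,u)$, and let $H := \{(s,z)\in\R^2 : z = \tfrac{u}{u-l}(s-l)\}$ be the line associated with the candidate facet inequality. Since we work in $\R^2$, a facet is a face of dimension $2-1=1$, i.e.\ an edge; so the whole task reduces to (a) confirming that $T$ is full-dimensional, so that ``facet'' indeed means ``edge,'' and (b) showing that the face $F := T\cap H$ has affine dimension exactly $1$.

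First I would verify full-dimensionality of $T$. The edge vectors from $(0,0)$ to the other two vertices are $(l,0)$ and $(u,u)$, whose determinant is $lu$; since $l<0<u$ we have $lu\neq 0$, so the three vertices are affinely independent and $\dim T = 2$. Thus $T$ is genuinely $2$-dimensional and its facets are precisely its edges.

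Next, validity of the inequality $z \le \tfrac{u}{u-l}(s-l)$ on $T$ is already part of the convex-hull description \eqref{eq:hull}, so $H$ is a supporting hyperplane and $F = T\cap H$ is a genuine face. To pin down $\dim F$, I would apply the standard criterion: exhibit two affinely independent points of $T$ lying in $H$, together with one point of $T$ lying strictly off $H$. Substituting shows that $(l,0)$ and $(u,u)$ both satisfy the defining equation with equality (at $s=l$ the right-hand side is $0$, and at $s=u$ it is $u$), and these two vertices are distinct, so $\dim F \ge 1$. Conversely, at the apex $(0,0)$ the right-hand side equals $\tfrac{-lu}{u-l}$, which is strictly positive because $-l>0$, $u>0$, and $u-l>0$; hence $(0,0)\notin H$, so $F \subsetneq T$ and $\dim F \le 1$. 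Combining these gives $\dim F = 1$, i.e.\ $F$ is the edge joining $(l,0)$ to $(u,u)$, which is a facet of the full-dimensional polytope $T$.

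I do not expect a genuine obstacle here: the reasoning is exactly the textbook ``two affinely independent tight points plus one strictly slack point'' test for a supporting hyperplane to carve out a facet, and the three vertex evaluations are one-line sign computations. The only point that warrants care is the bookkeeping that makes the statement meaningful in the first place, namely establishing $\dim T = 2$ up front (via the nonzero determinant $lu$), since facetness is always relative to the ambient polytope's dimension; once full-dimensionality is secured, the tightness and strictness checks at the vertices close the argument immediately.
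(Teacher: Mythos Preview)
Your proof is correct and follows essentially the same route as the paper: both verify that the two vertices $(l,0)$ and $(u,u)$ lie on the supporting line and conclude that the resulting edge is a facet. Your version is slightly more explicit in checking full-dimensionality of $T$ (via the determinant $lu\neq 0$) and in exhibiting a strictly slack point $(0,0)$, which the paper leaves implicit.
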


\begin{proof}
In $\R^2$, a facet is a 1-dimensional face (an edge). Consider the two distinct points
$(l,0)$ and $(u,u)$, which both satisfy the inequality with equality:
\[
0 = \frac{u}{u-l}(l-l),\qquad
u = \frac{u}{u-l}(u-l).
\]
These points are vertices of $\mathrm{conv}\,G(l,u)$ and are affinely independent in $\R^2$.
The set of points satisfying equality is the line through them; its intersection with the polytope is exactly the edge
joining $(l,0)$ to $(u,u)$, hence a facet. Therefore the inequality is facet-defining.
\end{proof}

\subsection{Exact \SMTLRA{} view and equisatisfiability with \MILP{} encodings}

The \SMTLRA{} encoding retains Boolean structure explicitly.
For each ReLU coordinate, introduce a Boolean atom $\delta$ and enforce:
\[
\delta \Rightarrow (s\ge 0 \land z=s),\qquad
\neg \delta \Rightarrow (s\le 0 \land z=0),
\]
conjoined with the affine layer equalities and domain/property constraints \cite{KatzReluplex2017,KatzMarabou2019}
.

\begin{theorem}[\SMTLRA{} and bounded \MILP{} encodings are equisatisfiable]\label{thm:smt-milp}
Fix a region on which valid bounds $l\le s\le u$ are available for every ReLU pre-activation.
Let $\Phi_{\SMT}$ be the guarded \SMTLRA{} encoding and $\Phi_{\MILP}$ be the \MILP{} encoding obtained by applying
\eqref{eq:bm1b}--\eqref{eq:bm4b} to every ReLU coordinate.
Then $\Phi_{\SMT}$ is satisfiable iff $\Phi_{\MILP}$ is satisfiable.
\end{theorem}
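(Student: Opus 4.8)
The plan is to prove equisatisfiability by establishing a correspondence between satisfying assignments of the two encodings, leveraging the per-coordinate exactness already available from Theorem~\ref{thm:bigM-exact}. The key observation is that both encodings share the same affine layer equalities, the same domain/property constraints, and the same valid bounds $l\le s\le u$ on each pre-activation; they differ only in how they encode the ReLU relation $z=\relu(s)$ at each coordinate. Thus the argument reduces to showing that, coordinate by coordinate, the guarded \SMT(\LRA) constraints and the big-$M$ constraints \eqref{eq:bm1b}--\eqref{eq:bm4b} cut out the same set of feasible $(s,z)$ pairs once the Boolean/integer variable is existentially quantified away.

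First I would fix an arbitrary assignment to the shared variables $v=(x,(s^{(i)}),(z^{(i)}))$ and note that satisfaction of the affine and domain/property parts is identical in both encodings; only the ReLU gadgets remain to be matched. Next I would argue the forward direction: if $\Phi_{\SMT}$ is satisfiable, take a satisfying assignment together with its Boolean valuation of the atoms $\delta$. For each coordinate the guarded implications force exactly $z=\relu(s)$ with $l\le s\le u$ (the active branch gives $s\ge0,\ z=s$; the inactive branch gives $s\le0,\ z=0$). By the implication (i)$\Rightarrow$(ii) of Theorem~\ref{thm:bigM-exact}, there exists an integer $\delta\in\{0,1\}$ satisfying \eqref{eq:bm1b}--\eqref{eq:bm4b} at that coordinate. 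Collecting these $\delta$-values across all coordinates yields a satisfying assignment for $\Phi_{\MILP}$. For the converse, a satisfying assignment of $\Phi_{\MILP}$ supplies, per coordinate, an integer $\delta$ meeting \eqref{eq:bm1b}--\eqref{eq:bm4b}; by (ii)$\Rightarrow$(i) of the same theorem we recover $z=\relu(s)$ and $l\le s\le u$, which in turn determines a truth value for the \SMT\ atom ($\delta=1$ when $s\ge0$, $\delta=0$ when $s\le0$) making the guarded implications hold. Hence $\Phi_{\SMT}$ is satisfiable.

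The main subtlety to handle carefully is the \emph{boundary case} $s=0$, where both phases are simultaneously admissible: here $z=0=\relu(0)$ and either truth value of the \SMT\ atom satisfies its guard, and correspondingly either $\delta\in\{0,1\}$ works in the big-$M$ system. I would remark that this overlap is harmless for equisatisfiability precisely because we only assert existence of \emph{some} witness, not uniqueness; the face-intersection structure of Proposition~\ref{prop:face-intersection} is the geometric shadow of this overlap. A second point requiring a sentence is the standing hypothesis $l<u$ (strict) assumed in Theorem~\ref{thm:bigM-exact}: on coordinates that are already stable ($u\le0$ or $l\ge0$) the ReLU is linear and both encodings collapse to the same affine equality ($z=0$ or $z=s$ respectively), so equisatisfiability is immediate there and the big-$M$ gadget is applied only to genuinely unstable coordinates with $l<0<u$.

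I do not expect a hard obstacle in this proof; the real content is entirely front-loaded into Theorem~\ref{thm:bigM-exact}, and the remaining work is bookkeeping to lift a per-coordinate equivalence to a conjunction over all coordinates while keeping the shared affine/domain constraints fixed. The one place demanding genuine care is ensuring that the witness choices are made \emph{independently} at each coordinate and then assembled, rather than attempting a single global case split over the exponentially many activation patterns --- this independence is exactly why the encoding is compositional and is the precise point where Theorem~\ref{thm:bigM-exact}'s coordinate-local formulation does the heavy lifting.
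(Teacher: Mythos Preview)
Your proposal is correct and follows essentially the same approach as the paper: both directions are proved by a per-coordinate translation of the Boolean/integer phase variable, invoking Theorem~\ref{thm:bigM-exact} to certify that the big-$M$ constraints and the guarded implications agree on the ReLU relation, with the shared affine/domain/property constraints carried over unchanged. Your treatment is in fact slightly more thorough than the paper's, since you explicitly address the boundary case $s=0$ and the stable cases where Theorem~\ref{thm:bigM-exact}'s hypothesis $l<u$ might otherwise require comment.
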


\begin{proof}
($\Rightarrow$) Given a model of $\Phi_{\SMT}$, interpret each Boolean $\delta$ as a binary value in $\{0,1\}$.
For each ReLU coordinate: if $\delta=\mathrm{true}$, then the model satisfies $s\ge 0$ and $z=s$; set $\delta=1$.
If $\delta=\mathrm{false}$, then it satisfies $s\le 0$ and $z=0$; set $\delta=0$.
By Theorem~\ref{thm:bigM-exact}, the corresponding big-$M$ constraints hold, hence the assignment satisfies $\Phi_{\MILP}$.

($\Leftarrow$) Given a model of $\Phi_{\MILP}$, each binary $\delta\in\{0,1\}$ chooses a branch.
If $\delta=1$, Theorem~\ref{thm:bigM-exact} forces $z=s$ and $s\ge 0$, so set the Boolean atom to true.
If $\delta=0$, it forces $z=0$ and $s\le 0$, so set the Boolean atom to false.
All affine constraints and domain/property constraints are shared, hence we obtain a model of $\Phi_{\SMT}$.
\end{proof}

\section{Exact Certificate Checking and Normalization}
\label{sec:cas-core}

This section isolates the \emph{computer algebra} kernel of proof-carrying verification:
how to represent, validate, and compress certificates as \emph{exact symbolic objects} over $\Q$  \cite{NeculaPCC1997,Schrijver1986}
.
The key point is methodological (and non-speculative): solvers may use floating arithmetic internally,
but the \emph{exported artifact} is a rational certificate whose correctness can be checked by
pure symbolic computation (exact arithmetic and linear algebra).
Our development is \emph{vertical} in the sense that we prove (i) soundness and completeness of
certificate forms for entailment and infeasibility, and (ii) existence and construction of
\emph{sparse certificates} with dimension-dependent support bounds, enabling aggressive log compression.

\subsection{Certificate objects: infeasibility and entailment}
\label{subsec:cert-objects}

Throughout, let $A\in\Q^{p\times d}$ and $b\in\Q^{p}$ define a polyhedron
\[
P(A,b) \;:=\; \{v\in\R^d : A v \le b\}.
\]
All equalities can be represented by pairs of inequalities, so the inequality form is not restrictive \cite{Schrijver1986}.

\begin{definition}[Farkas infeasibility certificate]\label{def:farkas}
A \emph{Farkas certificate} for infeasibility of $P(A,b)$ is a vector $y\in\Q^{p}$ such that
\[
y \ge 0,\qquad y^\top A = 0^\top,\qquad y^\top b < 0.
\]
\end{definition}

Farkas certificates prove emptiness. For \emph{learned cuts} and \emph{bound-tightening} we need a second notion:
certificates that a linear inequality is \emph{entailed} by $A v\le b$.

\begin{definition}[Entailment certificate]\label{def:entail}
Let $c\in\Q^d$ and $\tau\in\Q$. A vector $y\in\Q^p$ is an \emph{entailment certificate}
for the implication
\[
A v \le b \ \Longrightarrow\ c^\top v \le \tau
\]
if
\[
y \ge 0,\qquad y^\top A = c^\top,\qquad y^\top b \le \tau.
\]
\end{definition}

\begin{remark}[Algebraic meaning]
Definition~\ref{def:entail} says that $c^\top v \le \tau$ is a nonnegative
$\Q$-linear combination of the inequalities in $A v\le b$.
Thus entailment certificates are \emph{proof terms} in the conic closure of constraint rows.
\end{remark}



\subsection{Certificate normal form}
\label{subsec:normal-forms}

\begin{definition}[Primitive row form in $\le$-orientation]\label{def:row-primitive}
Let $(a,\beta)\in\Q^{d}\times\Q$ represent the inequality $a^\top v \le \beta$.
Define $\mathrm{prim}_{\le}(a,\beta)$ as follows:
\begin{enumerate}[label=(\roman*),leftmargin=1.8em]
\item (\emph{Clear denominators}) Multiply by a common positive integer to obtain
an integer pair $(\bar a,\bar\beta)\in\Z^{d}\times\Z$ defining the same inequality.
\item (\emph{Make primitive}) Let
\[
g := \gcd\bigl(|\bar a_1|,\dots,|\bar a_d|,|\bar\beta|\bigr),
\]
and set $(\tilde a,\tilde\beta):=(\bar a/g,\bar\beta/g)$ if $(\bar a,\bar\beta)\neq (0,0)$
(and $(\tilde a,\tilde\beta)=(0,0)$ otherwise).
\item (\emph{No sign flip}) Output $(\tilde a,\tilde\beta)$.
\end{enumerate}
The output is the \emph{$\le$-primitive form} of the row.
\end{definition}

\begin{proposition}[Row canonicalization preserves feasible sets]\label{prop:row-preserve}
For any inequality $a^\top v \le \beta$, the transformed inequality
$\mathrm{prim}_{\le}(a,\beta)$ defines the \emph{same} halfspace in $\R^d$.
Rowwise application to a store $Av\le b$ preserves the feasible set.
\end{proposition}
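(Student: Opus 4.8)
The plan is to reduce everything to the single elementary fact that multiplying a $\le$-inequality by a \emph{positive} rational scalar leaves its solution halfspace unchanged. Concretely, for any $\lambda>0$ and any $(a,\beta)\in\Q^d\times\Q$, the sets $\{v\in\R^d : a^\top v\le\beta\}$ and $\{v\in\R^d : \lambda a^\top v\le\lambda\beta\}$ coincide, since dividing the second inequality by $\lambda>0$ recovers the first and the relation $\le$ is preserved under positive scaling. I would isolate this as the pivotal observation, verifying it directly from $a^\top v\le\beta \iff \lambda(a^\top v)\le\lambda\beta$ for $\lambda>0$.

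Next I would argue that each of the two steps composing $\mathrm{prim}_{\le}$ is exactly such a positive scaling. Clearing denominators in step (i) multiplies $(a,\beta)$ by a common positive integer $M>0$, yielding $(\bar a,\bar\beta)=(Ma,M\beta)$. Making the row primitive in step (ii) divides $(\bar a,\bar\beta)$ by $g=\gcd(|\bar a_1|,\dots,|\bar a_d|,|\bar\beta|)>0$ in the nondegenerate case. Composing, the full transformation sends $(a,\beta)$ to $(\tilde a,\tilde\beta)=\tfrac{M}{g}(a,\beta)$ with $\tfrac{M}{g}>0$. The crucial point — and precisely the reason step (iii) forbids a sign flip — is that the scalar is positive, so by the pivotal observation the halfspace is unchanged.

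I would then dispose of the degenerate cases explicitly. If $(\bar a,\bar\beta)=(0,0)$ the convention outputs $(0,0)$, and the inequality $0\le 0$ cuts out all of $\R^d$ both before and after. If $\bar a=0$ but $\bar\beta\neq0$, then $g=|\bar\beta|$ and the output is $(0,\bar\beta/|\bar\beta|)=(0,\pm1)$; here the \emph{sign} of $\beta$ is retained, so the trivially-true halfspace ($\beta>0$, giving $\R^d$) or the empty set ($\beta<0$) is faithfully reproduced. These checks confirm that no orientation information is lost.

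Finally, for the rowwise claim, I would observe that the feasible set of $Av\le b$ is the intersection $\bigcap_i\{v : a_i^\top v\le b_i\}$ of the per-row halfspaces. Since applying $\mathrm{prim}_{\le}$ to each row preserves that row's halfspace by the argument above, the intersection is unchanged, so the whole store has the same feasible set. The proof is essentially bookkeeping; the one genuine subtlety is confirming that the gcd-normalization divides by a \emph{positive} integer and that the no-sign-flip rule preserves the inequality's direction — that is the single place where a careless implementation could silently reverse a constraint and alter the polyhedron.
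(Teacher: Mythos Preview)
Your proposal is correct and follows essentially the same approach as the paper: both argue that step~(i) multiplies by a positive integer and step~(ii) divides by a positive integer, so each preserves the halfspace, and rowwise application preserves the intersection. Your version is more careful in making the positive-scaling observation explicit and in disposing of the degenerate cases $(\bar a,\bar\beta)=(0,0)$ and $\bar a=0,\ \bar\beta\neq 0$, which the paper's terse proof leaves implicit.
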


\begin{proof}
Step (i) multiplies the inequality by a positive integer, hence preserves the halfspace.
Step (ii) divides by a positive integer, hence also preserves the halfspace.
Step (iii) performs no further algebraic change.
Applying the transformation rowwise preserves each constraint and therefore preserves the feasible set.
\end{proof}

\subsection{Exact checkers: infeasibility and entailment}
\label{subsec:exact-checkers}

\paragraph{Infeasibility checker.}
The following algorithm is pure symbolic computation: it performs sign checks and matrix-vector products
in $\Q$.\cite{CheungVIPR2017}

\begin{algorithm}[t]
\caption{\textsc{CheckFarkas}$(A,b,y)$}
\label{alg:checkfarkas}
\begin{algorithmic}[1]
\Require $A\in\Q^{p\times d}$, $b\in\Q^p$, certificate $y\in\Q^p$
\Ensure \textsf{ACCEPT} or \textsf{REJECT}
\If{$\exists i:\ y_i<0$} \State \Return \textsf{REJECT} \EndIf
\If{$y^\top A \neq 0^\top$} \State \Return \textsf{REJECT} \EndIf
\If{$y^\top b \ge 0$} \State \Return \textsf{REJECT} \EndIf
\State \Return \textsf{ACCEPT}
\end{algorithmic}
\end{algorithm}

\begin{theorem}[Soundness of \textsc{CheckFarkas}]\label{thm:checkfarkas-sound}
If \textsc{CheckFarkas} accepts $(A,b,y)$, then $P(A,b)=\emptyset$.
\end{theorem}

\begin{proof}
Acceptance means $y\ge 0$, $y^\top A=0^\top$, and $y^\top b<0$.
If there existed $v$ with $A v\le b$, then multiplying by $y\ge 0$ gives $y^\top A v\le y^\top b<0$.
But $y^\top A v=(y^\top A)v=0$, contradiction. Hence $P(A,b)=\emptyset$.
\end{proof}

\paragraph{Entailment checker.}
Entailment certificates allow us to validate learned linear inequalities and bound-tightening results.

\begin{algorithm}[t]
\caption{\textsc{CheckEntail}$(A,b,c,\tau,y)$}
\label{alg:checkentail}
\begin{algorithmic}[1]
\Require $A\in\Q^{p\times d}$, $b\in\Q^p$, target $(c,\tau)\in\Q^d\times\Q$, certificate $y\in\Q^p$
\Ensure \textsf{ACCEPT} or \textsf{REJECT}
\If{$\exists i:\ y_i<0$} \State \Return \textsf{REJECT} \EndIf
\If{$y^\top A \neq c^\top$} \State \Return \textsf{REJECT} \EndIf
\If{$y^\top b > \tau$} \State \Return \textsf{REJECT} \EndIf
\State \Return \textsf{ACCEPT}
\end{algorithmic}
\end{algorithm}

\begin{theorem}[Soundness of \textsc{CheckEntail}]\label{thm:checkentail-sound}
If \textsc{CheckEntail} accepts $(A,b,c,\tau,y)$, then
\[
A v\le b \ \Longrightarrow\ c^\top v \le \tau.
\]
\end{theorem}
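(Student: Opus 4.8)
The plan is to mirror the soundness argument for \textsc{CheckFarkas} (Theorem~\ref{thm:checkfarkas-sound}): acceptance encodes exactly the three conditions of an entailment certificate (Definition~\ref{def:entail}), and from these the target inequality follows by taking a nonnegative conic combination of the constraint rows. So first I would unpack the acceptance condition. The checker returns \textsf{ACCEPT} only when none of its three rejection branches fires, hence acceptance is equivalent to the conjunction $y\ge 0$ (no negative component), $y^\top A = c^\top$ (the linear-form equality), and $y^\top b \le \tau$ (the slack bound). These are precisely the hypotheses supplied by an entailment certificate.

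Next I would fix an arbitrary $v$ satisfying $A v\le b$ and aim to derive $c^\top v\le\tau$. The central move is to left-multiply the vector inequality $A v\le b$ by $y^\top$. Because $y\ge 0$, this operation is monotone: each row inequality $A_i v\le b_i$ is scaled by the nonnegative weight $y_i$ and the scaled rows are summed, so the direction of the inequality is preserved, yielding $y^\top A v\le y^\top b$.

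Finally I would substitute the remaining two certificate identities. The equality $y^\top A = c^\top$ rewrites the left-hand side as $c^\top v$, and the bound $y^\top b\le\tau$ chains on the right, giving $c^\top v = (y^\top A)v = y^\top A v\le y^\top b\le\tau$. Since $v$ was arbitrary in $P(A,b)$, the implication $A v\le b\Rightarrow c^\top v\le\tau$ holds, which is the claim.

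I do not expect a genuine obstacle here; the only step requiring care is the monotonicity of left-multiplication by $y\ge 0$, which is exactly the conic-combination (Farkas-type) principle already invoked in Theorem~\ref{thm:checkfarkas-sound}. The structural difference from the infeasibility case is purely in the conclusion: there the combination collapses the left side to $0$ and produces a strict contradiction $0\le y^\top b<0$, whereas here the combination reproduces the target linear form $c^\top v$ and the slack condition $y^\top b\le\tau$ turns it into a valid entailed bound rather than a contradiction. Everything else is bookkeeping in exact arithmetic over $\Q$.
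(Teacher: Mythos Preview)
Your proposal is correct and follows essentially the same approach as the paper: unpack acceptance into $y\ge 0$, $y^\top A=c^\top$, $y^\top b\le\tau$, then for any $v$ with $Av\le b$ use nonnegativity of $y$ to obtain $y^\top A v\le y^\top b$ and chain $c^\top v=y^\top A v\le y^\top b\le\tau$. The paper's proof is merely terser; your added commentary comparing to the Farkas case is accurate but inessential.
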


\begin{proof}
Assume $A v\le b$. Since $y\ge 0$, we have $y^\top A v \le y^\top b$.
If \textsc{CheckEntail} accepts then $y^\top A=c^\top$ and $y^\top b\le \tau$, hence
\[
c^\top v = y^\top A v \le y^\top b \le \tau.
\]
\end{proof}

\begin{proposition}[Arithmetic cost of certificate checking]\label{prop:checker-cost}
Let $A\in\Q^{p\times d}$, $b\in\Q^p$, and let $y\in\Q^p$ be a purported certificate.
\textsc{CheckFarkas} performs:
\begin{itemize}[leftmargin=1.8em]
\item $p$ sign comparisons (to verify $y\ge 0$),
\item $pd$ rational multiplications and $(p-1)d$ rational additions (to compute $y^\top A$),
\item $p$ rational multiplications and $(p-1)$ rational additions (to compute $y^\top b$),
\item a constant number of equality/inequality checks in $\Q$.
\end{itemize}
\textsc{CheckEntail} has the same asymptotic cost, with the equality check $y^\top A=c^\top$ replacing $y^\top A=0^\top$
and the comparison $y^\top b\le \tau$ replacing $y^\top b<0$.
In particular, both checkers run in time polynomial in $p$ and $d$ (and in the bit-length of the rational inputs under exact arithmetic).
\end{proposition}

\subsection{Completeness: entailment certificates via LP duality}
\label{subsec:completeness-entail}

The next theorem is a cornerstone for proof-producing propagation:
it states that whenever a linear inequality is valid over a polyhedron, there exists
an explicit entailment certificate of the form in Definition~\ref{def:entail}.
This provides a \emph{complete} proof system for linear implications over $P(A,b)$.
\cite{DutertreDeMoura2006}

\begin{theorem}[Certificate completeness for linear entailment]\label{thm:entail-complete}
Let $P(A,b)$ be nonempty and suppose the linear program
\[
\max\{c^\top v:\ A v\le b\}
\]
has a finite optimum value $\mathrm{opt}\in\R$.
Then there exists $y\in\Q^p$ with $y\ge 0$ such that
\[
y^\top A = c^\top,\qquad y^\top b = \mathrm{opt}.
\]
In particular, if $c^\top v\le \tau$ holds for all $v\in P(A,b)$, then there exists an entailment certificate
$y\ge 0$ with $y^\top A=c^\top$ and $y^\top b\le \tau$.
\end{theorem}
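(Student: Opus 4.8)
The plan is to recognize this statement as exactly the strong-duality theorem of linear programming, specialized to guarantee a \emph{rational} certificate. First I would write down the dual of the primal program $\max\{c^\top v:\ Av\le b\}$. In standard form the dual is $\min\{b^\top y:\ A^\top y=c,\ y\ge 0\}$, whose feasible vectors $y\ge 0$ with $y^\top A=c^\top$ are precisely the candidate certificates, and whose objective $y^\top b$ is the quantity we must pin to $\mathrm{opt}$.

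Next I would invoke strong duality. Because $P(A,b)$ is nonempty (primal feasible) and the primal optimum is finite (primal bounded above), LP duality guarantees that the dual is feasible and that its optimal value equals the primal value $\mathrm{opt}$, with both optima attained. This produces some $y^\star\ge 0$ satisfying $y^{\star\top}A=c^\top$ and $y^{\star\top}b=\mathrm{opt}$, which is exactly the asserted equality certificate over $\R$.

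The main obstacle, and the only part that is not a verbatim quotation of textbook duality, is the rationality requirement $y\in\Q^p$, since the paper's entire methodology rests on certificates living over $\Q$. Here I would use that all data $A,b,c$ are rational, so the dual feasible region $\{y:\ A^\top y=c,\ y\ge 0\}$ is a rational polyhedron and the dual objective is a rational linear functional. A finite LP optimum over a rational polyhedron is attained at a basic feasible solution (a vertex), which is the unique solution of a nonsingular square rational subsystem; by Cramer's rule its entries are rational. Hence the optimal $y$ may be taken in $\Q^p$. Equivalently, I would cite that rational linear programs admit rational optimal solutions.

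Finally, for the ``in particular'' clause I would argue directly: if $c^\top v\le\tau$ holds throughout $P(A,b)$, then $\mathrm{opt}=\max\{c^\top v:\ Av\le b\}\le\tau$, so the rational $y$ produced above already satisfies $y\ge 0$, $y^\top A=c^\top$, and $y^\top b=\mathrm{opt}\le\tau$, which is precisely an entailment certificate in the sense of Definition~\ref{def:entail}. I would close by noting that, combined with the soundness statement of Theorem~\ref{thm:checkentail-sound}, this establishes that the entailment calculus is both sound and complete: every valid linear implication over $P(A,b)$ possesses a checkable rational witness.
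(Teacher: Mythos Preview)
Your proposal is correct and follows essentially the same route as the paper: write the dual $\min\{y^\top b:\ y\ge 0,\ y^\top A=c^\top\}$, invoke LP strong duality from primal feasibility and finite optimum, and then argue rationality of the optimal dual $y$ via a basic feasible solution of a rational system (you make the Cramer's-rule step explicit, which the paper leaves implicit). The ``in particular'' clause is handled identically, via $\mathrm{opt}\le\tau$.
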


\begin{proof}
Consider the primal LP $\max\{c^\top v:\ A v\le b\}$.
Its (standard) dual is
\[
\min\{y^\top b:\ y\ge 0,\ y^\top A = c^\top\}.
\]
Since $P(A,b)$ is nonempty and the primal optimum is finite, strong duality holds for linear programs:
the dual is feasible and achieves the same optimum value $\mathrm{opt}$.
Because $A,b,c$ are rational, one may choose an optimal dual solution $y$ with rational entries:
indeed, an optimal solution can be taken at a basic feasible point of the dual polyhedron,
and basic solutions are obtained by solving linear systems with rational coefficients,
hence are rational. Therefore there exists $y\in\Q^p$ with $y\ge 0$, $y^\top A=c^\top$, and $y^\top b=\mathrm{opt}$.
If $c^\top v\le \tau$ holds over $P(A,b)$, then $\mathrm{opt}\le \tau$, giving $y^\top b\le \tau$.
\end{proof}

\begin{remark}[Why this is the right ``hook'']
Theorem~\ref{thm:entail-complete} turns \emph{all} linear propagation (bounds, cuts, implications)
into checkable symbolic objects. It is not a heuristic: it is a completeness guarantee rooted in LP duality.
\end{remark}

\subsection{Sparse certificates: existence and constructive compression}
\label{subsec:sparse}

A practical certificate log should be small. The next theorem provides a sharp, dimension-dependent
upper bound on certificate support: infeasibility admits a certificate using only $d+1$ inequalities,
independent of $p$. This is a strong and non-speculative compression result, supporting compact and independently checkable proof artifacts.

\begin{definition}[Support]\label{def:support}
For $y\in\Q^p$, define $\mathrm{supp}(y):=\{i\in\{1,\dots,p\}: y_i\neq 0\}$.
\end{definition}

\begin{theorem}[Dimension-sparse Farkas certificates]\label{thm:sparse-farkas}
If $P(A,b)=\emptyset$ with $A\in\Q^{p\times d}$, $b\in\Q^p$, then there exists a Farkas certificate
$y\in\Q^p$ such that
\[
|\mathrm{supp}(y)| \le d+1.
\]
\end{theorem}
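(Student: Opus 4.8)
The plan is to reduce the statement to the conic version of Carath\'eodory's theorem applied in dimension $d+1$. First I would invoke the standard Farkas lemma: since $P(A,b)=\emptyset$, there exists at least one certificate $y^0\in\Q^p$ with $y^0\ge 0$, $(y^0)^\top A=0^\top$, and $(y^0)^\top b<0$ (rationality of some such $y^0$ comes from the rationality of basic feasible solutions of the rational system defining the certificate cone, exactly as in the completeness argument of Theorem~\ref{thm:entail-complete}). Set $\beta:=(y^0)^\top b<0$.

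The key reformulation is to view the rows of the augmented matrix $[A\mid b]$ as generators of a cone in $\R^{d+1}$. Let $w_i:=(a_i,b_i)\in\Q^{d+1}$ denote the pair formed by the $i$-th row $a_i$ of $A$ and the $i$-th entry $b_i$ of $b$. Then the two certificate conditions $\sum_i y^0_i a_i=0$ and $\sum_i y^0_i b_i=\beta$ are exactly the single statement
\[
(0,\beta)\;=\;\sum_{i=1}^p y^0_i\,w_i\;\in\;\mathrm{cone}\{w_1,\dots,w_p\}\subseteq\R^{d+1}.
\]
Since $(0,\beta)$ lies in the conic hull of the $w_i$ in $\R^{d+1}$, the conic Carath\'eodory theorem supplies a subset $S\subseteq\{1,\dots,p\}$ of \emph{linearly independent} generators with $|S|\le d+1$ and nonnegative coefficients $(\tilde y_i)_{i\in S}$ such that $(0,\beta)=\sum_{i\in S}\tilde y_i w_i$. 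Extending $\tilde y$ by zeros off $S$ produces $y\in\R^p$ with $y\ge 0$ and $\mathrm{supp}(y)\subseteq S$, hence $|\mathrm{supp}(y)|\le d+1$. Reading off the first $d$ coordinates of the identity gives $y^\top A=0^\top$, while the last coordinate gives $y^\top b=\beta<0$; thus $y$ is a Farkas certificate, and the strict inequality is preserved for free because the exact target point $(0,\beta)$ is reproduced.

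It remains to secure rationality of the reduced certificate, which is the one point requiring care. Because the selected generators $\{w_i\}_{i\in S}$ are linearly independent in $\R^{d+1}$ and the target $(0,\beta)$ is rational, the coefficients $(\tilde y_i)_{i\in S}$ are the unique solution of a full-column-rank rational linear system and are therefore rational; equivalently, one runs the classical Carath\'eodory reduction, repeatedly detecting a rational linear dependence among the active generators and eliminating one of them by a sign-chosen rational step that zeroes out a coordinate first, which never leaves $\Q$ and never disturbs the fixed last coordinate $\beta<0$. The only genuine obstacle is confirming that this reduction always reaches linear independence while keeping all coefficients nonnegative and the representation exact; this is precisely what the Carath\'eodory direction-of-motion choice guarantees. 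An alternative route avoids Carath\'eodory entirely: take a vertex of the pointed polyhedron $\{y\ge 0:\ y^\top A=0^\top,\ y^\top b\le -1\}$ (pointed because it lies in the nonnegative orthant, nonempty after scaling $y^0$), and count active constraints --- the $d$ equalities together with the active bounds $y_i=0$ (and possibly $y^\top b\le -1$) must have rank $p$, forcing $|\mathrm{supp}(y)|\le d+1$, with rationality supplied by the rational vertex.
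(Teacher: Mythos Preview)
Your proof is correct, and the primary Carath\'eodory route is genuinely different from the paper's argument. The paper normalizes an arbitrary Farkas certificate to lie in the bounded polytope $Y=\{y\ge 0:\ y^\top A=0^\top,\ \mathbf{1}^\top y=1\}$, then minimizes the linear functional $y\mapsto y^\top b$ over $Y$ and takes a basic feasible optimum; counting the $d+1$ equality constraints of $Y$ yields the support bound, and rationality comes from the rational basis. Your approach instead lifts the rows to $w_i=(a_i,b_i)\in\R^{d+1}$ and applies conic Carath\'eodory to the point $(0,\beta)$, which is more elementary and needs no optimization at all; the price is that you must arrange $\beta\in\Q$ beforehand (you do this by taking $y^0$ rational, though targeting $(0,-1)$ after a positive scaling would also work and would let you drop that preliminary step). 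Your alternative route---taking a vertex of $\{y\ge 0:\ y^\top A=0^\top,\ y^\top b\le -1\}$ and counting active constraints---is essentially the paper's idea with $\mathbf{1}^\top y=1$ replaced by $y^\top b\le -1$; both give the same bound and the same rationality argument. The paper's LP formulation buys a directly implementable extraction procedure (solve an LP, read off a rational basic optimum), which matches its proof-carrying theme; your Carath\'eodory reduction buys a shorter self-contained proof with minimal machinery.
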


\begin{proof}
By infeasibility, there exists some Farkas certificate $\hat y\ge 0$ with $\hat y^\top A=0^\top$ and $\hat y^\top b<0$.
Scale it so that $\sum_{i=1}^p \hat y_i = 1$ (possible since $\hat y\neq 0$ and scaling by a positive rational preserves
certificate validity). Consider the polyhedron
\[
Y := \{y\in\R^p:\ y\ge 0,\ y^\top A = 0^\top,\ \mathbf{1}^\top y = 1\},
\]
where $\mathbf{1}$ is the all-ones vector.
Note that $\hat y\in Y$ and that the linear functional $y\mapsto y^\top b$ satisfies $\hat y^\top b<0$,
hence $\min\{y^\top b:\ y\in Y\} < 0$.

Let $y^\star$ be an optimal solution of the LP $\min\{y^\top b:\ y\in Y\}$.
Choose $y^\star$ to be a basic feasible solution. The system defining $Y$ consists of:
\begin{itemize}[leftmargin=1.8em]
\item $d$ equality constraints $y^\top A = 0^\top$ (one for each column of $A$),
\item $1$ equality constraint $\mathbf{1}^\top y = 1$,
\item and nonnegativity constraints $y\ge 0$.
\end{itemize}
A basic feasible solution in $\R^p$ with $d+1$ independent equality constraints has at most $d+1$
strictly positive components; equivalently, $|\mathrm{supp}(y^\star)|\le d+1$.
Moreover, since $y^\star$ is optimal and the optimum is $<0$, we have $y^{\star\top} b<0$.
Also $y^\star\ge 0$ and $y^{\star\top}A=0^\top$ by construction, hence $y^\star$ is a Farkas certificate.
Finally, because the LP data are rational, $y^\star$ can be chosen rational (again by selecting a rational basic solution).
\end{proof}

\begin{corollary}[Compressed infeasibility proofs]\label{cor:compressed}
Any infeasibility proof for $A v\le b$ can be transformed into an equivalent Farkas certificate
supported on at most $d+1$ inequalities. Consequently, region-wise proof logs can be stored as
$O(d)$ nonzeros per leaf, independently of the number of generated constraints.
\end{corollary}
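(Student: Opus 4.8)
The plan is to read this corollary as a direct packaging of Theorem~\ref{thm:sparse-farkas}, supplemented by two small additions: a clarification of what ``infeasibility proof'' and ``equivalent'' mean, and an elementary counting argument for the storage claim. First I would observe that the logical content of \emph{any} infeasibility proof for $Av \le b$ --- whether it arrives as an LP-solver trace, a chain of learned cuts, or a case-split refutation --- is exactly the assertion $P(A,b)=\emptyset$, and this single fact is the only hypothesis required by Theorem~\ref{thm:sparse-farkas}.

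Then I would invoke Theorem~\ref{thm:sparse-farkas} verbatim: from $P(A,b)=\emptyset$ it produces a rational Farkas certificate $y^\star\in\Q^p$ with $y^\star\ge 0$, $y^{\star\top}A=0^\top$, $y^{\star\top}b<0$, and $|\mathrm{supp}(y^\star)|\le d+1$. To justify the word ``equivalent,'' I would note that $y^\star$ references the same data $(A,b)$ (most of its coordinates are simply zero), so it satisfies Definition~\ref{def:farkas} for the original system and is therefore accepted by \textsc{CheckFarkas} by Theorem~\ref{thm:checkfarkas-sound}. Hence $y^\star$ certifies precisely the same emptiness fact that the original proof did, which is exactly the claimed transformation.

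For the storage bound the argument is a pure count. A certificate with $|\mathrm{supp}(y^\star)|\le d+1$ has at most $d+1$ nonzero rational entries, which one stores as index/value pairs, i.e.\ $O(d)$ nonzeros. The decisive point is that this bound depends only on the ambient dimension $d$, never on $p$, the total number of rows of $A$ --- including all cuts, bound-tightening inequalities, and stabilization rows accumulated during the run. Applying this leaf by leaf across the case-split (branch-and-bound) tree then gives the stated $O(d)$ nonzeros per leaf, independent of how many constraints were generated.

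The main obstacle here is not technical but definitional: the corollary's strength rests entirely on the $p$-independence already delivered by Theorem~\ref{thm:sparse-farkas}, so the real care goes into stating precisely that ``equivalent'' means ``certifies the same emptiness fact over the same $(A,b)$,'' and that the transformation is constructive --- concretely, it is the sparsifying program $\min\{y^\top b:\ y\ge 0,\ y^\top A=0^\top,\ \mathbf{1}^\top y=1\}$ from the proof of Theorem~\ref{thm:sparse-farkas}, whose optimal basic feasible solution supplies $y^\star$. Once these points are pinned down, no further work is needed.
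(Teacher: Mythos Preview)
Your proposal is correct and matches the paper's treatment: the corollary is stated without a separate proof, as an immediate consequence of Theorem~\ref{thm:sparse-farkas}, and your argument simply unpacks that consequence together with the obvious $O(d)$ storage count. The only addition you make beyond the paper is the explicit clarification of ``equivalent'' and the reference back to the sparsifying LP for constructivity, both of which are consistent with the paper's subsequent remark.
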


\begin{remark}[Constructive extraction of sparse certificates]
The proof yields a concrete construction: solve the auxiliary LP on $Y$ and extract a basic optimal solution.
This is an \emph{exact} symbolic target: even if the LP is solved numerically, the final output can be rationalized
and verified by \textsc{CheckFarkas}. No speculative assumption is required.
\end{remark}


\subsection{Normalization and certificate compression}
\label{subsec:norm-compress}

\begin{definition}[Normalization of a rational vector]\label{def:norm-y}
For $y\in\Q^p$, define $\mathrm{norm}(y)$ by:
\begin{enumerate}[label=(\roman*),leftmargin=1.8em]
\item (\emph{Clear denominators}) multiply $y$ by a common positive integer to obtain an integer vector $\bar y\in\Z^p$;
\item (\emph{Make primitive}) if $\bar y\neq 0$, divide by $g=\gcd(|\bar y_1|,\dots,|\bar y_p|)$ to obtain a primitive integer vector;
\item (\emph{Sparsify representation}) store only the list of nonzero pairs $(i,y_i)$.
\end{enumerate}
\end{definition}


\begin{proposition}[Normalization preserves Farkas infeasibility certificates]\label{prop:norm-farkas}
Let $A\in\Q^{p\times d}$ and $b\in\Q^p$. If $y$ is a Farkas certificate for infeasibility of $Av\le b$
(i.e.\ $y\ge 0$, $y^\top A=0^\top$, $y^\top b<0$), then $\mathrm{norm}(y)$ is also a Farkas certificate.\cite{Schrijver1986}
\end{proposition}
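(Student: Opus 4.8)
The plan is to observe that the map $\mathrm{norm}$ is nothing more than multiplication of $y$ by a positive rational scalar, and that each of the three defining conditions of a Farkas certificate in Definition~\ref{def:farkas} is invariant under such scaling. First I would unpack Definition~\ref{def:norm-y}: step (i) multiplies $y$ by some positive integer $c$, yielding $\bar y = c\,y$; step (ii), when $\bar y \neq 0$, divides by the positive integer $g = \gcd(|\bar y_1|,\dots,|\bar y_p|)$, yielding $\tilde y = \bar y / g = (c/g)\,y$; and step (iii) is a purely representational change (storing only the nonzero coordinates) that does not alter the underlying vector. Hence $\mathrm{norm}(y) = \lambda y$ with $\lambda := c/g \in \Q_{>0}$.

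Before invoking step (ii), I must confirm that the gcd is well defined, i.e.\ that $\bar y \neq 0$. This follows because any Farkas certificate satisfies $y^\top b < 0$, which forces $y \neq 0$ (were $y = 0$ we would have $y^\top b = 0$, contradicting strictness); scaling by the positive integer $c$ then preserves $\bar y \neq 0$. Thus step (ii) applies and $\lambda$ is a well-defined positive rational, so that $\mathrm{norm}(y)$ remains a rational vector of the same length.

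With $\mathrm{norm}(y) = \lambda y$ and $\lambda > 0$ established, I would verify the three certificate conditions by direct linearity. Nonnegativity $\lambda y \ge 0$ holds since $\lambda > 0$ and $y \ge 0$. The homogeneous equality is preserved: $(\lambda y)^\top A = \lambda\,(y^\top A) = \lambda\cdot 0^\top = 0^\top$ by scalar homogeneity of the matrix product. Finally the strict sign condition survives, since $(\lambda y)^\top b = \lambda\,(y^\top b) < 0$ because a positive scalar times a negative number is negative. Therefore $\mathrm{norm}(y)$ meets every requirement of Definition~\ref{def:farkas} and is itself a Farkas certificate.

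There is essentially no hard part here: the entire content is that the Farkas conditions form a scale-invariant system under strictly positive scaling (two strictly homogeneous constraints together with a strict sign inequality), and normalization acts exactly as such a scaling. The only point that genuinely requires care is the well-definedness of the gcd step, which I dispatch via the nonvanishing of $y$ forced by $y^\top b < 0$; everything else reduces to the homogeneity of matrix--vector multiplication and the sign behavior of positive scalars.
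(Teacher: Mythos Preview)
Your proof is correct and follows the same approach as the paper: both identify $\mathrm{norm}(y)=\lambda y$ for some $\lambda\in\Q_{>0}$ and then verify the three Farkas conditions by positive scaling. Your additional care in confirming $\bar y\neq 0$ (so that the gcd step is well defined) is a nice detail that the paper's proof leaves implicit.
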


\begin{proof}
By Definition~\ref{def:norm-y}, $\mathrm{norm}(y)=\lambda y$ for some $\lambda\in\Q_{>0}$ (ignoring the sparse storage format),
hence $\mathrm{norm}(y)\ge 0$, $(\mathrm{norm}(y))^\top A=\lambda y^\top A=0^\top$, and
$(\mathrm{norm}(y))^\top b=\lambda y^\top b<0$.
\end{proof}


\begin{remark}[Entailment certificates are scale-sensitive]\label{rem:entail-scale}
For entailment certificates we require the exact identity $y^\top A=c^\top$ in
Algorithm~\ref{alg:checkentail}. Therefore scaling $y$ alone generally destroys checkability
for the \emph{same} target inequality $c^\top v\le \tau$.
\end{remark}

\begin{proposition}[Scale-equivalence for entailment certificates]\label{prop:entail-scale-eq}
Let $A\in\Q^{p\times d}$, $b\in\Q^p$ and let $y\in\Q^p$ certify the entailment
\[
Av\le b\ \Longrightarrow\ c^\top v\le \tau
\]
in the sense that $y\ge 0$, $y^\top A=c^\top$, and $y^\top b\le \tau$.
Then for every $\lambda\in\Q_{>0}$, the scaled vector $\lambda y$ certifies the scaled entailment
\[
Av\le b\ \Longrightarrow\ (\lambda c)^\top v\le \lambda\tau.
\]
\end{proposition}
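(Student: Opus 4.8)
The plan is to verify directly that $\lambda y$ meets the three defining conditions of an entailment certificate (Definition~\ref{def:entail}) for the \emph{scaled} target $(\lambda c,\lambda\tau)$. No appeal to LP duality or to the soundness/completeness results is needed: the claim is a purely algebraic consequence of linearity of the three conditions in $y$, combined with positivity of $\lambda$. In effect this is a homogeneity statement about the conic form of Definition~\ref{def:entail}.

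First I would check nonnegativity: since $\lambda\in\Q_{>0}$ and $y\ge 0$ componentwise, the product satisfies $\lambda y\ge 0$. Second, I would verify the equality constraint by factoring the scalar out of the matrix--vector product, $(\lambda y)^\top A=\lambda\,(y^\top A)=\lambda c^\top=(\lambda c)^\top$, using the hypothesis $y^\top A=c^\top$. Third, I would verify the threshold inequality $(\lambda y)^\top b=\lambda\,(y^\top b)\le\lambda\tau$, where the inequality follows by multiplying $y^\top b\le\tau$ by the strictly positive scalar $\lambda$. Having confirmed all three, I conclude by Definition~\ref{def:entail} that $\lambda y$ certifies $Av\le b\Rightarrow(\lambda c)^\top v\le\lambda\tau$, which is exactly the claim.

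The only point requiring care is the third step: the direction of $y^\top b\le\tau$ is preserved precisely because $\lambda>0$. This is why the hypothesis restricts to $\lambda\in\Q_{>0}$ rather than $\lambda\ge 0$ (which would collapse both target sides to zero at $\lambda=0$, yielding only the trivial certificate) or $\lambda\in\Q$ (a negative $\lambda$ would both reverse the threshold inequality and violate nonnegativity). I do not anticipate any genuine obstacle; the result is a one-line homogeneity check, and it is worth contrasting with Proposition~\ref{prop:norm-farkas}, where rescaling could be absorbed \emph{without} changing the target because the Farkas conditions ($y^\top A=0^\top$ and the strict sign of $y^\top b$) are themselves scale-invariant, whereas here the exact identity $y^\top A=c^\top$ forces the target to scale in lockstep with $y$ (cf.\ Remark~\ref{rem:entail-scale}).
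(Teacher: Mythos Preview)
Your proof is correct and follows exactly the same approach as the paper's: verify the three conditions of Definition~\ref{def:entail} directly, using linearity of the matrix--vector products and preservation of the inequality under multiplication by $\lambda>0$. The paper's proof is essentially your second paragraph compressed to three lines; your additional commentary on why $\lambda>0$ is necessary and the contrast with Proposition~\ref{prop:norm-farkas} is accurate and adds useful context, but the mathematical content is identical.
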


\begin{proof}
Since $\lambda>0$, we have $\lambda y\ge 0$.
Moreover $(\lambda y)^\top A=\lambda(y^\top A)=\lambda c^\top$ and
$(\lambda y)^\top b=\lambda(y^\top b)\le \lambda\tau$.
\end{proof}

\begin{definition}[Synchronized normalization for entailment]\label{def:norm-entail}
Given an entailment instance $(A,b,c,\tau,y)$ with $A\in\Q^{p\times d}$, $b\in\Q^p$,
$(c,\tau)\in\Q^d\times\Q$ and certificate $y\in\Q^p$, define
\[
\mathrm{NormEntail}(c,\tau,y):=(c',\tau',y')
\]
as follows. Choose any $\lambda\in\Q_{>0}$ such that $y':=\lambda y$ has integer entries
(e.g.\ $\lambda$ is the least common multiple of denominators of $y$), then set
\[
(c',\tau') := (\lambda c,\ \lambda\tau),
\]
and finally replace $y'$ by its primitive integer representative by dividing by $\gcd$ of its entries,
\emph{while applying the same division to $(c',\tau')$}.
Equivalently, $\mathrm{NormEntail}$ multiplies the triple $(c,\tau,y)$ by a common positive rational and then
makes it primitive.
\end{definition}

\begin{proposition}[Synchronized normalization preserves entailment checkability]\label{prop:norm-entail}
If $y$ is an entailment certificate for $(c,\tau)$ over the store $Av\le b$, then for
$(c',\tau',y')=\mathrm{NormEntail}(c,\tau,y)$ we have:
\[
y'\ge 0,\qquad y'^\top A = c'^\top,\qquad y'^\top b \le \tau',
\]
so Algorithm~\ref{alg:checkentail} accepts $(A,b,c',\tau',y')$.
\end{proposition}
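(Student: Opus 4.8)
The plan is to reduce everything to the scale-equivalence statement already proved in Proposition~\ref{prop:entail-scale-eq}. The essential observation is contained in the ``equivalently'' clause of Definition~\ref{def:norm-entail}: the two steps of $\mathrm{NormEntail}$ --- first multiplying by $\lambda\in\Q_{>0}$ to clear denominators, then dividing $y'$ and $(c',\tau')$ simultaneously by $g=\gcd$ of the entries of $\lambda y$ --- compose into a single uniform rescaling of the whole triple. Concretely, $(c',\tau',y')=\mu\,(c,\tau,y)$ with $\mu:=\lambda/g\in\Q_{>0}$. So the problem is exactly to show that a common positive-rational rescaling of $(c,\tau,y)$ preserves certificate validity, which is precisely what Proposition~\ref{prop:entail-scale-eq} asserts.

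Carrying this out, I would first verify that $\mu$ is indeed a positive rational. Here $\lambda>0$ by choice, and $g$ is the $\gcd$ of the absolute values of the (integer) entries of $\lambda y$; whenever $y\neq 0$ this is a positive integer, so $1/g>0$ and hence $\mu>0$. (The degenerate case $y=0$ forces $c=0$ and $\tau\ge 0$ from the certificate conditions $y^\top A=c^\top$ and $y^\top b\le\tau$; then no $\gcd$ division is needed and the triple is essentially unchanged, so the conclusion is immediate.) Next I would apply Proposition~\ref{prop:entail-scale-eq} with its scalar instantiated to $\mu$: since $y$ certifies $Av\le b\Rightarrow c^\top v\le\tau$, the scaled vector $\mu y=y'$ certifies $Av\le b\Rightarrow(\mu c)^\top v\le\mu\tau$, i.e.\ the target $(c',\tau')=(\mu c,\mu\tau)$. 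Unpacking this certification yields precisely $y'\ge 0$, $y'^\top A=c'^\top$, and $y'^\top b\le\tau'$.

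Finally, these three conditions are exactly the three tests performed by \textsc{CheckEntail} (Algorithm~\ref{alg:checkentail}): nonnegativity of the multiplier, the equality $y'^\top A=c'^\top$, and the bound $y'^\top b\le\tau'$. Since none of the \textsf{REJECT} branches can fire, the algorithm returns \textsf{ACCEPT} on $(A,b,c',\tau',y')$. I expect no real obstacle here beyond the bookkeeping of confirming that the net multiplier is a single positive rational; the only point requiring a moment's care is the $\gcd$ step, where one must check that dividing by $g$ (an unsigned positive integer, rather than by any signed quantity) keeps the multiplier positive, and therefore disturbs neither the sign condition $y'\ge 0$ nor the direction of the inequality $y'^\top b\le\tau'$.
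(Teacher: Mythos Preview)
Your proposal is correct and follows essentially the same route as the paper: both reduce the claim to Proposition~\ref{prop:entail-scale-eq} by observing that $\mathrm{NormEntail}$ multiplies the triple $(c,\tau,y)$ by a single positive rational. Your treatment is slightly more thorough in that you explicitly handle the degenerate case $y=0$ and verify positivity of the net multiplier $\mu=\lambda/g$, whereas the paper simply asserts the construction and invokes the proposition.
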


\begin{proof}
By construction, $(c',\tau',y')=(\lambda c,\lambda\tau,\lambda y)$ for some $\lambda\in\Q_{>0}$,
possibly followed by division by a common positive integer factor applied to all three objects.
Thus the identities and inequalities are preserved exactly as in
Proposition~\ref{prop:entail-scale-eq}.
\end{proof}

\begin{remark}[Canonical targets]
In proof logs it is convenient to store entailment targets in primitive form as well.
Definition~\ref{def:norm-entail} achieves a canonical representative of the \emph{positive scaling class}
of the inequality $c^\top v\le\tau$, together with a certificate that remains exactly checkable.
\end{remark}
\subsection{Certificate algebra: closure properties and log composition}
\label{subsec:cert-algebra}

Certificates compose, and this compositionality is algebraic (hence computer-algebra-friendly).

\begin{lemma}[Closure of entailment certificates under conic combination ]
\label{lem:entail-closure-corrected}
Let $A\in\Q^{p\times d}$ and $b\in\Q^p$. Fix a target inequality $c^\top v\le \tau$ with
$c\in\Q^{d}$ and $\tau\in\Q$. Suppose $y^{(1)},y^{(2)}\in\Q^p$ are entailment certificates for
\[
Av\le b \ \Longrightarrow\ c^\top v\le \tau,
\]
i.e.,
\[
y^{(k)}\ge 0,\qquad (y^{(k)})^\top A=c^\top,\qquad (y^{(k)})^\top b\le \tau,
\qquad k\in\{1,2\}.
\]
Then for any $\alpha,\beta\in\Q_{\ge 0}$, the conic combination
\[
y:=\alpha y^{(1)}+\beta y^{(2)}
\]
is an entailment certificate for the \emph{scaled} target inequality
\[
(\alpha+\beta)\,c^\top v \ \le\ (\alpha+\beta)\,\tau.
\]
In particular, if $\alpha+\beta=1$ (a convex combination), then $y$ is an entailment certificate for
the \emph{same} target $c^\top v\le \tau$.
\end{lemma}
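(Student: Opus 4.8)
The plan is to verify directly that $y := \alpha y^{(1)} + \beta y^{(2)}$ satisfies the three defining conditions of an entailment certificate (Definition~\ref{def:entail}) for the scaled target $(c',\tau') := ((\alpha+\beta)c,\ (\alpha+\beta)\tau)$. No duality or structural argument is needed here: the closure property is a consequence of the \emph{linearity} of the maps $y\mapsto y^\top A$ and $y\mapsto y^\top b$, together with the \emph{monotonicity} of the latter under nonnegative scaling. The proof is therefore a short sequence of three checks, mirroring exactly the acceptance conditions of Algorithm~\ref{alg:checkentail}.

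First I would check nonnegativity: since $\alpha,\beta\in\Q_{\ge 0}$ and $y^{(1)},y^{(2)}\ge 0$ componentwise, the combination $y=\alpha y^{(1)}+\beta y^{(2)}$ is a nonnegative rational vector. Second, I would compute the product against $A$ by linearity and substitute the hypothesis $(y^{(k)})^\top A = c^\top$ for both $k$:
\[
y^\top A = \alpha\,(y^{(1)})^\top A + \beta\,(y^{(2)})^\top A = \alpha c^\top + \beta c^\top = (\alpha+\beta)\,c^\top = (c')^\top,
\]
which establishes the exact linear-combination identity demanded by the checker for the scaled objective $c'$.

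Third --- and this is the only step where signs actually matter --- I would bound $y^\top b$. By linearity, $y^\top b = \alpha\,(y^{(1)})^\top b + \beta\,(y^{(2)})^\top b$; then, because each $(y^{(k)})^\top b \le \tau$ and the multipliers $\alpha,\beta$ are nonnegative, multiplying preserves the inequality direction, giving
\[
y^\top b \le \alpha\tau + \beta\tau = (\alpha+\beta)\tau = \tau'.
\]
Taken together, the conditions $y\ge 0$, $y^\top A = (c')^\top$, and $y^\top b \le \tau'$ are precisely what \textsc{CheckEntail} verifies for the scaled target, so $y$ certifies $(\alpha+\beta)\,c^\top v \le (\alpha+\beta)\,\tau$; specializing to $\alpha+\beta=1$ collapses $(c',\tau')$ back to $(c,\tau)$ and yields the convex-combination corollary. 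The only subtlety worth flagging is the nonnegativity of $\alpha,\beta$ in this last step: were either scalar negative, the inequality $(y^{(k)})^\top b\le\tau$ would reverse and the bound would break --- which is exactly why the admissible combinations form a cone, and why the target must be rescaled by $\alpha+\beta$ rather than remaining fixed. This is the point I would state carefully, since it is the substance distinguishing this corrected lemma from a naive ``certificates add'' claim.
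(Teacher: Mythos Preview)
Your proposal is correct and follows essentially the same approach as the paper: verify the three defining conditions of an entailment certificate directly, using linearity of $y\mapsto y^\top A$ and $y\mapsto y^\top b$ together with nonnegativity of the multipliers. Your added commentary on why $\alpha,\beta\ge 0$ is essential for the third step is accurate and a nice clarification, but the core argument is identical to the paper's.
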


\begin{proof}
Since $\alpha,\beta\ge 0$ and $y^{(1)},y^{(2)}\ge 0$, we have $y\ge 0$. Moreover,
\[
y^\top A=(\alpha y^{(1)}+\beta y^{(2)})^\top A
=\alpha (y^{(1)})^\top A+\beta (y^{(2)})^\top A
=\alpha c^\top+\beta c^\top
=(\alpha+\beta)c^\top.
\]
Similarly,
\[
y^\top b
=\alpha (y^{(1)})^\top b+\beta (y^{(2)})^\top b
\le \alpha\tau+\beta\tau
=(\alpha+\beta)\tau.
\]
Thus $y$ certifies $Av\le b\Rightarrow (\alpha+\beta)c^\top v\le (\alpha+\beta)\tau$.
If $\alpha+\beta=1$, this is exactly $c^\top v\le \tau$.
\end{proof}

\begin{remark}
Lemma~\ref{lem:entail-closure-corrected} provides an algebraic foundation for ``proof log compilation'':
multiple certificate fragments can be merged and then normalized/sparsified.
This is not heuristic—closure is exact and checkable.
\end{remark}

\section{Worked Examples: Explicit Certificates and Exact Checking}
\label{sec:worked-examples}

This section gives small but practically representative examples showing how a verifier can
\emph{export rational proof artifacts} that are independently checkable.
We demonstrate (i) direct safety proofs as linear entailments validated by
\textsc{CheckEntail} (Algorithm~\ref{alg:checkentail}), and (ii) explicit \UNSAT\ pruning proofs
validated by \textsc{CheckFarkas} (Algorithm~\ref{alg:checkfarkas}).
All certificates are over $\Q$ and require only exact arithmetic to validate.

\subsection{Example A: Global safety by entailment, plus optional \UNSAT\ demo}
\label{subsec:exA-global}

\paragraph{Network, domain, and safety.}
Let $x=(x_1,x_2)\in D:=[0,1]^2$ and consider the two-ReLU fragment
\[
s_1=x_1-\tfrac{3}{4},\quad z_1=\relu(s_1),\qquad
s_2=x_2-\tfrac{3}{4},\quad z_2=\relu(s_2).
\]
Define the margin
\[
m(x)\ :=\ \tfrac{1}{2}-z_1-z_2.
\]
The safety property is $m(x)\ge 0$ on $D$, equivalently $z_1+z_2\le \tfrac{1}{2}$.

\subsubsection*{A.1 Direct safety proof via an entailment certificate (\textsc{CheckEntail})}

On $D$, each pre-activation is bounded:
\[
s_i=x_i-\tfrac{3}{4}\in\Bigl[-\tfrac{3}{4},\tfrac{1}{4}\Bigr]\qquad (i=1,2).
\]
For $(l,u)=(-\tfrac34,\tfrac14)$, the canonical convex-hull upper facet for a bounded ReLU gives
\[
z_i \ \le\ \frac{u}{u-l}(s_i-l)\ =\ \frac{1/4}{1}\Bigl(s_i+\frac34\Bigr)
\ =\ \frac14 s_i+\frac{3}{16}
\ =\ \frac14 x_i
\qquad (i=1,2),
\]
using $s_i=x_i-\tfrac34$.

Let $v=(x_1,x_2,z_1,z_2)$ and consider the linear store $Av\le b$ consisting of:
\[
x_1\le 1,\qquad x_2\le 1,\qquad z_1-\tfrac14 x_1\le 0,\qquad z_2-\tfrac14 x_2\le 0,
\]
i.e.
\[
A=\begin{bmatrix}
 1 & 0 & 0 & 0\\
 0 & 1 & 0 & 0\\
 -\tfrac14 & 0 & 1 & 0\\
 0 & -\tfrac14 & 0 & 1
\end{bmatrix},
\qquad
b=\begin{bmatrix}
 1\\
 1\\
 0\\
 0
\end{bmatrix}.
\]
We certify the target inequality $z_1+z_2\le \tfrac12$, i.e.\ $c^\top v\le \tau$ with
\[
c^\top=(0,0,1,1),\qquad \tau=\tfrac12,
\]
by choosing the certificate
\[
y=\bigl(\tfrac14,\ \tfrac14,\ 1,\ 1\bigr)^\top\in\Q^4_{\ge 0}.
\]
Then
\[
y^\top A
= \tfrac14(1,0,0,0)+\tfrac14(0,1,0,0)+1\!\left(-\tfrac14,0,1,0\right)+1\!\left(0,-\tfrac14,0,1\right)
=(0,0,1,1)=c^\top,
\]
and
\[
y^\top b=\tfrac14+\tfrac14+0+0=\tfrac12=\tau.
\]
Hence \textsc{CheckEntail}$(A,b,c,\tau,y)$ accepts, proving $z_1+z_2\le \tfrac12$ on $D$.
Therefore $m(x)=\tfrac12-(z_1+z_2)\ge 0$ holds on $D$, and the strict violation $m(x)<0$
(i.e.\ $z_1+z_2>\tfrac12$) is impossible.

\subsubsection*{A.2 Optional \UNSAT\ demonstration for a rational strengthening (\textsc{CheckFarkas})}

To illustrate explicit infeasibility certificates, strengthen the strict violation $z_1+z_2>\tfrac12$
to the rational constraint $z_1+z_2\ge \tfrac{7}{12}$, i.e.\ $-z_1-z_2\le -\tfrac{7}{12}$.
Augment $Av\le b$ with the extra row $(0,0,-1,-1)\,v\le -\tfrac{7}{12}$, obtaining $A'v\le b'$.
Choose
\[
y'=\bigl(0,\ 0,\ 1,\ 1,\ 1\bigr)^\top\in\Q^5_{\ge 0}.
\]
Then
\[
y'^\top A'=(0,0,1,1) + (0,0,-1,-1)=(0,0,0,0),
\qquad
y'^\top b'=0+0+0+0-\tfrac{7}{12}<0.
\]
Thus \textsc{CheckFarkas}$(A',b',y')$ accepts and the strengthened counterexample query is \UNSAT.

\begin{remark}
Part A.1 is the logically exact certificate for the original safety claim.
Part A.2 is included only to demonstrate explicit Farkas-style \UNSAT\ certificates on a rational strengthening.
\end{remark}

\subsection{Example B: Branch-and-bound pruning on a subdomain (explicit Farkas witness)}
\label{subsec:exB-branch}

\paragraph{Subdomain and attempted violation.}
Let $D':=[0,\tfrac12]^2\subset D$. Using the same hull consequence $z_i\le \tfrac14 x_i$,
we obtain the tighter bounds on $D'$:
\[
z_1\le \tfrac14\cdot \tfrac12=\tfrac18,\qquad z_2\le \tfrac18.
\]
Consider the (negated) safety condition on this branch:
\[
z_1+z_2 \ge \frac{1}{3}
\quad\Longleftrightarrow\quad
-(z_1+z_2)\le -\frac{1}{3}.
\]
We show this is infeasible by an explicit Farkas certificate.

\begin{theorem}[Explicit prune certificate on $D'$]\label{thm:exB-farkas}
Let $v=(z_1,z_2)$ and consider the system
\[
A=\begin{bmatrix}
1 & 0\\
0 & 1\\
-1&-1
\end{bmatrix},
\qquad
b=\begin{bmatrix}
\frac{1}{8}\\[0.2em]
\frac{1}{8}\\[0.2em]
-\frac{1}{3}
\end{bmatrix},
\]
encoding $z_1\le \tfrac18$, $z_2\le \tfrac18$, and $z_1+z_2\ge \tfrac13$.
Then $A v\le b$ is infeasible. A valid Farkas certificate is $y=(1,1,1)^\top$.
\end{theorem}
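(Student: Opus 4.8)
The plan is to prove infeasibility by directly certifying the proposed witness $y=(1,1,1)^\top$ against Definition~\ref{def:farkas} and then appealing to the soundness of the exact checker (Theorem~\ref{thm:checkfarkas-sound}). Thus I would reduce the claim to verifying the three Farkas conditions: nonnegativity $y\ge 0$, the linear identity $y^\top A=0^\top$, and the strict sign condition $y^\top b<0$.

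Nonnegativity is immediate since every entry of $y$ equals $1$. For the identity I would form the nonnegative row combination with weights $y$, obtaining
\[
y^\top A = (1,0)+(0,1)+(-1,-1) = (0,0) = 0^\top,
\]
so the aggregated inequality has zero left-hand side. For the sign condition I would evaluate the corresponding right-hand side in exact rational arithmetic,
\[
y^\top b = \tfrac18+\tfrac18-\tfrac13 = \tfrac14-\tfrac13 = -\tfrac{1}{12} < 0.
\]
Since all three conditions hold, $y$ is a Farkas certificate, and Theorem~\ref{thm:checkfarkas-sound} (equivalently, the one-line argument that any feasible $v$ would force $0=(y^\top A)v\le y^\top b=-\tfrac{1}{12}$) yields that the system is infeasible, i.e.\ $P(A,b)=\emptyset$.

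The same computation admits a transparent reading: adding the three rows $z_1\le\tfrac18$, $z_2\le\tfrac18$, and $-z_1-z_2\le-\tfrac13$ cancels both variables and produces $0\le-\tfrac{1}{12}$, an outright contradiction. This is exactly the geometric statement that the tightened upper bounds $z_i\le\tfrac18$ on $D'$ cannot coexist with $z_1+z_2\ge\tfrac13$. There is no genuine obstacle here beyond keeping the rational arithmetic exact; the only point worth emphasizing is that the certificate is independent of any floating-point computation and is validated purely by the sign checks and rational dot products performed inside \textsc{CheckFarkas}.
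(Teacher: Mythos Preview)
Your proof is correct and follows essentially the same approach as the paper: verify $y\ge 0$, compute $y^\top A=(0,0)$ and $y^\top b=\tfrac18+\tfrac18-\tfrac13=-\tfrac{1}{12}<0$, then invoke soundness of \textsc{CheckFarkas}. The only addition you make is the informal ``add the rows'' paraphrase, which is harmless and matches the paper's argument.
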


\begin{proof}
Clearly $y\ge 0$. Compute
\[
y^\top A = (1,1,1)\begin{bmatrix}1&0\\0&1\\-1&-1\end{bmatrix}=(0,0)=0^\top,
\]
and
\[
y^\top b = \frac18+\frac18-\frac13=\frac14-\frac13=-\frac{1}{12}<0.
\]
Thus $y$ satisfies the Farkas conditions, and \textsc{CheckFarkas} accepts.
\end{proof}

\begin{remark}
This is a typical branch-and-bound pruning step: a local box constraint tightens bounds and yields a
short rational \UNSAT\ witness.
\end{remark}

\subsection{Example C: A learned aggregated cut with an entailment certificate}
\label{subsec:exC-cut}

\paragraph{Learned cut.}
From $z_1\le \tfrac18$ and $z_2\le \tfrac18$ one learns the aggregated inequality
\[
z_1+z_2 \le \frac14,
\]
which is useful for pruning and tightening.

\begin{proposition}[Entailment certificate for the learned cut]\label{prop:exC-entail}
Let $v=(z_1,z_2)$ and take the store $Av\le b$ with
\[
A=\begin{bmatrix}1&0\\0&1\end{bmatrix},
\qquad
b=\begin{bmatrix}\tfrac18\\[0.1em]\tfrac18\end{bmatrix}.
\]
Let the target be $c^\top v\le\tau$ with $c=(1,1)$ and $\tau=\tfrac14$.
Then $y=(1,1)^\top\in\Q^2_{\ge 0}$ is an entailment certificate.
\end{proposition}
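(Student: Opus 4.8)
The plan is to verify directly that $y=(1,1)^\top$ satisfies the three defining conditions of an entailment certificate in Definition~\ref{def:entail}, namely nonnegativity $y\ge 0$, the algebraic identity $y^\top A=c^\top$, and the bound $y^\top b\le\tau$. Since all data are rational and the computations are finite, this reduces to three elementary checks in exact arithmetic over $\Q$; once they pass, Theorem~\ref{thm:checkentail-sound} immediately yields that \textsc{CheckEntail} accepts and hence that $Av\le b\Rightarrow c^\top v\le\tau$.

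First I would observe that $y=(1,1)^\top\ge 0$ holds trivially. Next, because the store matrix here is the identity $A=I_2$, the identity condition collapses to $y^\top A=y^\top=(1,1)=c^\top$; in fact this already shows the certificate is \emph{forced}, since $y^\top A=c^\top$ together with $A=I_2$ leaves no freedom beyond $y=c=(1,1)^\top$. The only genuine content of the proposition is therefore the third condition.

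The third step is to compute $y^\top b=\tfrac18+\tfrac18=\tfrac14=\tau$, so that $y^\top b\le\tau$ holds---with equality. I would flag this tightness as the one point worth remarking: the learned cut $z_1+z_2\le\tfrac14$ is exactly the sum of the two box bounds $z_1\le\tfrac18$ and $z_2\le\tfrac18$, so the conic combination producing it uses both rows with unit weight and saturates the target. There is no real obstacle here; the ``hard part,'' such as it is, is merely noticing that equality (rather than strict inequality) in $y^\top b\le\tau$ is expected and fully compatible with Definition~\ref{def:entail}, which permits $y^\top b\le\tau$. With these three checks in hand, the proposition follows.
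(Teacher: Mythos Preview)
Your proof is correct and follows exactly the same approach as the paper: verify the three defining conditions $y\ge 0$, $y^\top A=c^\top$, and $y^\top b\le\tau$ by direct computation. Your additional observations about the certificate being forced (since $A=I_2$) and about the tightness $y^\top b=\tau$ are not in the paper's proof but are accurate and add useful commentary.
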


\begin{proof}
We have $y\ge 0$,
\[
y^\top A = (1,1)\begin{bmatrix}1&0\\0&1\end{bmatrix}=(1,1)=c^\top,
\qquad
y^\top b = \tfrac18+\tfrac18=\tfrac14=\tau.
\]
Hence \textsc{CheckEntail} accepts.
\end{proof}

\subsection{Example D: Certified stabilization (removing a Boolean choice)}
\label{subsec:exD-stab}

\paragraph{Network fragment.}
Let $x\in[0,1]$ and consider
\[
s=-2x-\frac{1}{5},\qquad z=\relu(s).
\]
Since $x\ge 0$ implies $-2x\le 0$, we expect $s\le -\tfrac15<0$ on the domain, hence $z\equiv 0$.

\begin{proposition}[Entailment certificate for inactivity]\label{prop:exD-entail}
Let $v=(x,s)$ and consider the store $Av\le b$ given by the two inequalities
\[
-x\le 0
\qquad\text{and}\qquad
s+2x\le -\frac15
\]
(the second is the affine definition of $s$ written as an inequality).
Then the bound $s\le -\tfrac15$ is entailed.
An explicit entailment certificate for $s\le -\tfrac15$ is $y=(2,1)^\top$.
\end{proposition}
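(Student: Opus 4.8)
The plan is to recognize this proposition as a direct instance of the entailment checker \textsc{CheckEntail} (Algorithm~\ref{alg:checkentail}) together with its soundness guarantee (Theorem~\ref{thm:checkentail-sound}), exactly in the style of Examples~A through~C. First I would pin down the data explicitly in the coordinates $v=(x,s)$: the store $Av\le b$ is
\[
A=\begin{bmatrix}-1 & 0\\ 2 & 1\end{bmatrix},\qquad b=\begin{bmatrix}0\\ -\tfrac15\end{bmatrix},
\]
where row~1 encodes $-x\le 0$ and row~2 encodes the affine definition $2x+s\le -\tfrac15$. The target inequality $s\le-\tfrac15$ is $c^\top v\le\tau$ with $c=(0,1)^\top$ and $\tau=-\tfrac15$. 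The whole proof is then the claim that the proposed $y=(2,1)^\top$ satisfies Definition~\ref{def:entail}.

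Next I would verify the three defining conditions by direct exact computation over $\Q$. Nonnegativity $y\ge 0$ is immediate. The coefficient-matching identity $y^\top A=c^\top$ is checked column by column: the $s$-column gives $2\cdot 0+1\cdot 1=1$, and the $x$-column gives the decisive cancellation $2\cdot(-1)+1\cdot 2=0$, so that $y^\top A=(0,1)=c^\top$. Finally $y^\top b=2\cdot 0+1\cdot(-\tfrac15)=-\tfrac15\le\tau$, in fact with equality. Since all three conditions hold, \textsc{CheckEntail}$(A,b,c,\tau,y)$ accepts, and Theorem~\ref{thm:checkentail-sound} immediately yields the desired entailment $Av\le b\Rightarrow s\le-\tfrac15$.

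There is no substantive obstacle here; the only step requiring any care is the coefficient-matching identity, where the contribution of the domain bound $-x\le 0$ (weighted by $2$) must exactly annihilate the $2x$ term produced by taking the affine row with weight $1$. Conceptually the certificate $y=(2,1)$ is simply the instruction \emph{``add twice the nonnegativity bound on $x$ to the affine definition of $s$ in order to eliminate $x$ and expose the bound on $s$,''} which is precisely the symbolic elimination underlying stabilization: once $s\le-\tfrac15<0$ is certified, the ReLU is provably inactive, its phase can be fixed to the inactive branch, and the associated Boolean disjunction is removed. I would close by noting that the witness is rational and validated purely in exact arithmetic, consistent with the proof-carrying discipline of Section~\ref{sec:cas-core}.
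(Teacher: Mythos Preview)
Your proposal is correct and takes essentially the same approach as the paper: you identify the store $A,b$, the target $(c,\tau)$, and verify the three conditions of Definition~\ref{def:entail} for $y=(2,1)^\top$ by direct computation, then invoke Theorem~\ref{thm:checkentail-sound}. The paper does precisely this (computing $y^\top A=2(-1,0)+1(2,1)=(0,1)=c^\top$ and $y^\top b=-\tfrac15=\tau$), and likewise concludes with the stabilization consequence that the ReLU is provably inactive.
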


\begin{proof}
Write $Av\le b$ with rows $a_1=(-1,0)$, $b_1=0$ and $a_2=(2,1)$, $b_2=-\tfrac15$.
Target is $c^\top v\le\tau$ with $c=(0,1)$ and $\tau=-\tfrac15$.
Let $y=(2,1)^\top\ge 0$. Then
\[
y^\top A = 2(-1,0)+1(2,1)=(0,1)=c^\top,
\qquad
y^\top b = 2\cdot 0 + 1\cdot\Bigl(-\tfrac15\Bigr)=-\tfrac15=\tau.
\]
Hence \textsc{CheckEntail} accepts and $s\le -\tfrac15<0$ holds on the domain.
Therefore the ReLU is stably inactive and $z=\relu(s)=0$ identically, eliminating a Boolean phase choice.
\end{proof}

\begin{remark}
Stabilization is a primary scalability mechanism in PWL verification: once a unit is certified inactive/active,
its disjunction disappears and subsequent proof steps become purely linear.
\end{remark}


\section{Conclusion}

We developed a proof-carrying verification core for ReLU (piecewise-linear) neural constraints in which
symbolic encodings and solver reasoning are paired with \emph{independently checkable evidence}.
On the encoding side, we formulated the exact PWL semantics both as a union-of-polyhedra model indexed by
activation patterns and as exact \SMT/\MILP{} extended formulations, and we connected these to the canonical
convex-hull relaxation that underlies scalable propagation.

The central contribution is the isolation of \emph{certificate validation} as a computer-algebra task.
We treated certificates as explicit objects over $\Q$ and provided exact checkers for two complementary forms:
(i) Farkas infeasibility certificates for pruning, and (ii) entailment certificates for validated bound tightening
and learned linear cuts. This framework yields a principled separation between \emph{certificate discovery}
(which may use any LP/\SMT/\MILP{} engine, possibly numerically) and \emph{certificate validation} (performed in exact
arithmetic), thereby supporting trustworthy \UNSAT{} claims even when the search procedure is complex.

Beyond soundness, the theory supports compact proof artifacts.
Our completeness result for linear entailment guarantees that any valid linear consequence admits a rational
entailment certificate, while the sparsity theorem for Farkas certificates ensures that infeasibility proofs can
be compressed to dimension-dependent support. The worked examples demonstrate how these ideas materialize in
practical verification steps---stabilization, aggregated learned cuts, and branch-and-bound pruning---each producing
small certificates that can be exported and checked independently.

The scope of the paper is intentionally focused on PWL/ReLU networks and linear safety specifications, where exact
polyhedral reasoning and rational certificates are natural. Extending proof-carrying validation to nonlinear
activations and richer specifications remains an important direction; nevertheless, the methodology developed here
already provides a rigorous algebraic backbone for certified PWL verification pipelines in current practice.
\section*{Acknowledgements}
The author gratefully acknowledges the support and encouragement provided by the
Commissionerate of Collegiate Education (CCE), Government of Andhra Pradesh, and the
Principal, Government College (Autonomous), Rajahmundry. Their administrative support and
research-enabling environment are sincerely appreciated.

\section*{Funding}
The author received no specific funding for this work.

\section*{Ethics Statement}
This research does not involve human participants, animals, or sensitive personal data.
No ethical approval was required.

\section*{Author Contributions}
The author solely conceived the study, developed the theoretical framework, wrote the manuscript,
and prepared all examples and proofs.



\end{document}